\documentclass[manyauthors]{fundam} %

\usepackage{graphicx,xcolor}
 \usepackage{multirow}
\usepackage{amssymb}
\usepackage{graphicx,amsmath}
\usepackage{algorithmicx}
\usepackage{algorithm}
\usepackage[noend]{algpseudocode}
%\newcounter{chapter} %
\usepackage{cleveref}
\usepackage{geometry}
\usepackage{placeins}
\usepackage{complexity}
\usepackage{url}

\newcommand{\aref}[1]{Alg.~\ref{#1}}

 \usepackage{hyperref}

\begin{document}

\setcounter{page}{205}
\publyear{2021}
\papernumber{2097}
\volume{184}
\issue{3}

   \finalVersionForARXIV
   %%\finalVersionForIOS

\title{Clustering Geometrically-Modeled Points in the Aggregated Uncertainty Model}

\author{Vahideh Keikha\thanks{Address for correspondence:  The Czech Academy of Sciences, Institute of Computer Science,
                 Pod Vod\'{a}renskou v\v{e}\v{z}\'{\i} 2, 182 07 Prague, Czech Republic.}\thanks{Also affiliated at: Department
                 of Computer Science, University of Sistan and Baluchestan, Zahedan, Iran. \newline \newline
              \vspace*{-6mm}{\scriptsize{Received  November 2021; \ accepted January 2022.}}}
          \\
The Czech Academy of Sciences\\
 Institute of Computer Science\\
%%%Pod Vod\'{a}renskou v\v{e}\v{z}\'{\i} 2, 182 07 \\
Prague, Czech Republic\\
%%Department of Computer Science\\
%%University of Sistan and Baluchestan, Zahedan, Iran.\\
keikha@cs.cas.cz
\and
 Sepideh Aghamolaei\\
 Department of Computer Engineering\\
 Sharif University of Technology\\
  Tehran, Iran\\
 aghamolaei@ce.sharif.edu
 \and  Ali Mohades\\
%%% Laboratory of Algorithms and \\
 %%Computational Geometry\\
 Department of Mathematics and
 Computer Sci.\\
  Amirkabir University of Technology\\
   Tehran, Iran\\
 mohades@aut.ac.ir
 \and
 Mohammad Ghodsi\\
Department of Computer Engineering\\
 Sharif University of Technology\\
  Tehran, Iran\\
		ghodsi@sharif.edu
}

\runninghead{V. Keikha et al.}{Clustering Geometrically-Modeled Points in the Aggregated Uncertainty Model}

 \maketitle

\vspace*{-8mm}
\begin{abstract}
  The $k$-center problem is to choose a subset of size $k$ from a set of $n$ points such that the maximum distance from each point to its nearest center is minimized.
  Let $Q=\{Q_1,\ldots,Q_n\}$ be a set of polygons or segments in the region-based uncertainty model, in which each $Q_i$ is an uncertain point,
  where the exact locations of the points in $Q_i$ are unknown.
  The geometric objects such as segments and polygons can be models of a point set.
  We define the uncertain version of the $k$-center problem as a generalization in which
  the objective is to find $k$ points from $Q$ to cover the remaining regions of $Q$ with minimum or maximum radius of the cluster to cover at least one or all exact instances of each $Q_i$, respectively. %
  We modify the region-based model to allow multiple points to be chosen from a region, and call the resulting model the {\em aggregated uncertainty model}.

  All these problems %
  contain the point version as a special case, so they are all NP-hard with a lower bound 1.822 for the approximation factor. We give approximation algorithms for uncertain $k$-center of a set of segments
  and polygons. We also have implemented some of our algorithms on a data-set to show our theoretical performance guarantees can be achieved in practice.

\medskip\noindent
\textbf{Keywords:}
$k$-center \and Uncertain data \and Approximation algorithms
\end{abstract}

\section{Introduction}
$k$-center is a classic problem in the fields of computational geometry, data mining, and approximation algorithms.
Suppose a set of $n$ points is given.
 The goal of metric $k$-center is to choose a subset of size $k$ called $C$ from a set of $n$ points such that the maximum distance from each point to its nearest center is minimized.

Metric $k$-center is NP-hard and it has a tight $2$-approximation algorithms in metric spaces \cite{vazirani2013approximation}.
According to Theorem 2.1 in~\cite{feder1988optimal}, the lower bound on the approximation factor of $k$-center in the Euclidean plane is $1.822$.

The data {\em uncertainty} usually comes from the error in the precision of representing numbers~\cite{salesin1989epsilon}, the error in the measurement of input data such as GPS data~\cite{loffler2009data} or other sources. Sometimes the probability of data is known based on previous data or measurement error~\cite{cormode2008approximation,suri2013most1}.
Uncertainty sometimes affects the correctness of geometric algorithms, for example, a convex hull might not be convex at all as a result of reduced precision. This has motivated a lot of uncertainty models over the years.

We introduce a new model for uncertainty, which we call the ``aggregated
uncertainty model". This allows the multiplicity of the realizations of an uncertain point to exceed one, which is a realistic assumption for many applications, including anonymized data, continuous data such as GPS data or sensor outputs, and repeated data which were summarized in the geometric version of the input as a single point. In this paper, when we refer to uncertainty, it is the aggregated region-based uncertainty model, which is the aggregated uncertainty model applied to a set of points in the region-based uncertainty model (we have defined it later in this section).

\medskip
Through the paper, the { covering} means the distance between the covered point and its closest center is at most the radius of the covering, and the goal is to minimize the radius of covering. We introduce and study the following two problems: (see \Cref{sec:def} for formal definitions):
\begin{enumerate}
\itemsep=0.9pt
	\item \textbf{MinMax segments clustering}: covering a set of $n$ segments with $k$ segments as the centers, and the maximum distance from a point on a segment to its center is minimized. And,
	
	\item \textbf{Domain-restricted $k$-center of polygons}: covering the points of a set of polygons by a set of $k$ points inside those polygons, and the maximum distance from a point on a polygon to its center is minimized.

\end{enumerate}

For each problem, we discuss two cases: the maximum version and the minimum version of the problem. In the maximum version, the goal is to cover the whole input, while in the minimum version the goal is to hit each input, i.e. to cover at least one point of each input.

Many variations of $k$-center have been studied before, in the following, we review the ones most related to our problems as explained below.

\paragraph{$k$-center Variations} In {\em $k$-line center} problem~\cite{agarwal2005approximation}, the input is a set of lines and the goal is to find $k$ lines (or cylinders in higher dimensions) as centers that minimize the distance from each point to the nearest line. In the same paper, a $(1+\epsilon)$-approximation algorithm for this problem in $O(n\log n)$ time was given.

Another related problem is stabbing a set of line segments with two congruent squares of minimum size\cite{sadhu2019linear}. The authors introduced a $\sqrt{2}$-approximation algorithm with running time $O(n)$. They have shown that their algorithm can also be applied to find two congruent disks of minimum size to cover the line segments. Note that this is different from the MinMax segments clustering since our objective is to find $k$ segments from the input as centers, such that this selection minimizes the radius of the covering.

A constrained version of our second problem is covering the area of a convex polygon $P$ of $n$ vertices by $k$ centers on its boundary so that each circle has the smallest possible radius~\cite{du2014approximation}. The authors of~\cite{du2014approximation} provide a 1.8841-approximation $O(nk)$ time algorithm, that first computes an approximate smallest bounding box $R$ for $P$, and solves the problem for $R$. Then, they translate the centers on the boundary of $R$, to the points on the boundary of $P$.
Later, a $(1+\frac 7 k+O(\epsilon))$-factor approximation algorithm for any $\epsilon>0$ and $k \ge 7$ was given, which runs in $O(n^2 (\log r_{opt})+\log \frac 1 \epsilon)$ time~\cite{basappa2015constrained}.

\paragraph {Uncertainty Models}
Several formulations have been proposed for modeling data uncertainty over the years, including epsilon  geometry~\cite{salesin1989epsilon}, the probabilistic model~\cite{cormode2008approximation,suri2013most1}, the region-based model~\cite{loffler2009data}, and the domain-restricted models~\cite{edalat2001convex}.

In the region-based model (which is equivalent to the locational model when each uncertain point has a uniform distribution over its region), two types of regions where each point can occur are usually discussed: the continuous and the discrete region.
The problem asks for the minimum and maximum solutions, defined as the solutions that work in the luckiest (best possible distribution based on the objective function) and unluckiest (worst possible distribution based on the objective function) of input points.

Several problems are studied in the region-based uncertainty model, where the objective was minimizing or maximizing the size of the area/perimeter of the convex hull ~\cite{loffler2010largest}, the area of the bounding box, the length of width and diameter, and the area of the smallest enclosing disk~\cite{loffler2010largest2}, or the area of the inscribing convex $k$-gon~\cite{keikha2017largest}, etc.
The smallest area/perimeter convex hull that contains at least one point of each imprecise point also called {\em polygon transversal}. For a given set of line segments, the smallest area/perimeter convex polygon that intersects all the segments can be computed in $O(n \log n)$ time~\cite{mukhopadhyay2008intersecting,rappaport1995minimum,boissonnat1996convex}.

\paragraph{Uncertain	$k$-center} In the discrete model of uncertainty in which each uncertain point is modeled by a discrete set of points with an assigned probability of occurrence, the $k$-center problem was studied in~\cite{cormode2008approximation,huang2017stochastic,wang2015one}, where, in all of them, the objective is to minimize the maximum expected distance from the uncertain points.
Also, the authors of~\cite{cormode2008approximation,wang2015one} only consider the problem in 1D space.
The $1$-center of a set of uncertain points also studied for rectilinear distances~\cite{wang2018computing}, where the uncertainty of each point is modeled as a set of discrete points with assigned probability, but the objective is still minimizing the maximum expected rectilinear distance to the uncertain points.

In the following, we briefly compare the introduced uncertainty model with the existing models.
In the region-based model, the objective is to compute the maximum and minimum possible feasible solutions, with the existing error bound estimations. In the domain-restricted model, the objective is to just find a class of feasible solutions. But, in the aggregated uncertainty model, the objective is to compute a compact subset of the input which gives a guaranteed approximation for any exact instance of the data, assuming the error bound estimation is given.

\paragraph{Aggregated Uncertainty Model}
A common event in modeling uncertain data is when multiple points are mapped to a single uncertain point. However, most existing uncertainty models allow only a single point to be chosen from each uncertainty region. We remove this assumption from region-based uncertainty models in our model, which is described in \Cref{def:uncertain}.
Also, in the current study, by an uncertain point we mean a point that is specified
by a region in which the point may lie.

\begin{definition}[Aggregated Uncertainty Model (AGU)]\label{def:uncertain}
For a set of uncertain points represented with regions $Q_1,\ldots,Q_n$ in $\mathbb{R}^2$ as inputs to a problem $\mathcal{P}$ in the AGU model, a feasible solution for $\mathcal{P}$ is a solution which is feasible for all subsets of points in $\cup_{i=1}^n Q_i$, such that at least one point from each set $Q_i$ is chosen. Indeed each uncertain point in the AGU model is defined by a continuous region, and corresponds to at least one point in the real world.

	An optimization problem in this model is solved by computing a minimum-cost solution with at least one point in each region and a maximum-cost solution with the same constraints. i.e, the optimal solution of the minimization version of a problem $\mathcal{P}$ in the AGU model is a minimum of the feasible solutions.
\end{definition}

\paragraph{$k$-Center in the aggregated uncertainty model.}
Since we allow multiple points to be chosen from each uncertainty region, the uncertainty models of the input points and the chosen centers can be different. In the domain-restricted $k$-center of polygons problem that we discussed in this paper, we assume the minimum cost for the set of centers is intended. More specifically, we allow each center to be an arbitrary point in an input polygon (uncertainty region).
Both the minimum and the maximum models for the output of an uncertainty problem, as defined for the region-based uncertainty models, have been discussed for the input points.

\paragraph{Motivation}
As an application, suppose for a huge set of $N$ data points we are allowed to draw a set of $n\ll N$ regression lines. Then each line segment in the line segment uncertainty model coincides with one of these regression lines (which we have bounded the length of each line by the range of the corresponding input points), where we require to classify all the points based on the computed regression lines.

In the domain-restricted $k$-center of polygons, we assume each polygon represents a geometrical label, e.g. areas on a map. Conversely, we may have a set of labeled points, and for the points with equal labels, we have computed the corresponding region which here the region is modeled by a polygon.
 Then the objective is to make a classification for such polygons, which is persistent for any point which lies in the polygons in the future.

\subsection*{Contributions}
{
We introduce a problem which we call multi-interval set cover, which appears as a sub-problem in many problems, e.g., the wireless network problems in which each interval can be seen as a moving sensor 	with bounded range~\cite{khelifa2009monitoring,kloder2007barrier}, or admissible choices in a game environment~\cite{bopardikar2008discrete}.}
\begin{itemize}
\itemsep=0.95pt
	\item
	In the MinMax segment clustering the objective is that of choosing $k$ segments as centers to cover the remaining regions modeled by a set of segments.
	We study both the maximization and the minimization versions that give some bounds on the radius of the clusters.
	 We show that $k$-center of segments is NP-hard by an approximation-preserving reduction from the set cover problem, so it cannot be approximated by a factor better than $\Omega(\log n)$ in polynomial time.
 However, the approximation factor of the multi-interval set cover affects only the number of clusters, which results in a bicriteria approximation for the problem.
A summary of the results is given in Table~1.

\renewcommand{\arraystretch}{1.3}
\begin{table}[h!]
	\centering
    \caption{The summary of the results. In our bi-criteria approximation for clustering, we assume $\beta k$ centers are used and the radius
        is $\alpha r$. $n$  is the number of input objects. The algorithms that assume the aspect ratio is fixed are marked with
        $^\ddagger$.} \label{table:results}
	\scalebox{0.87}{
   \begin{tabular}{|l|l|l|l|l|l|l|}
		\hline
		Problem & Region & Type & $\beta$ & $\alpha$ & Time & Refs. \!\!\\
		\hline
		\hline
		MinMax segment clustering ($k=1$)\! & Segments & Max & 1 & 1 & $O(n^2)$ & \aref{alg:1center}\!\\
		MinMax segment clustering\! & Segments & Max & $O(\log n)$ & $1+\epsilon$ & $O(\frac{n^3}{\epsilon^4})$ & \aref{alg:new}$^\ddagger$\!\\
		MinMax segment clustering\! & Segments & Max & $1$ & $\geq 1.822$ & $O(poly(n))$ & \cite{feder1988optimal}\!\\
		\hline
		MinMax segment clustering ($k=1$)\! & Segments & Min & 1 & 1 & $O(n^2)$ & \aref{alg:m1center}\!\\
		MinMax segment clustering\! & Segments & Min & $O(\log n)$ & $1+\epsilon$ & $O(\frac{n^3}{\epsilon^4})$ & \aref{alg:new2}\!\\
		\hline
	\end{tabular} }
	\end{table}

\item We prove a special case of the geometric set cover problem, which we call the multi-interval set cover problem is NP-hard, and prove it is NP-hard to approximate it by a factor better than $\Omega(\log n)$.

\item In the domain-restricted $k$-center of polygons in the AGU model, we solve the $k$-center of a set of polygons which is that of choosing $k$ points from the input as the centers to cover the area of the input polygons.
We study both the maximization and the minimization versions that give some bounds on the radius of the clusters over all possible distributions of exact instances. We give constant factor approximation algorithms for this problem. The results are summarized in \Cref{table:results2}.

\renewcommand{\arraystretch}{1.3}
\begin{table}[h]
	\centering
\caption{The summary of the results. In our bi-criteria approximation for clustering, we assume $\beta k$ centers are used and the radius is $\alpha r$. $n$ is the number of input objects. In polygon clustering problems, $N$ is the total complexity of the input polygons.
		The algorithms that assume the aspect ratio is fixed are marked with $^\ddagger$.}	\label{table:results2}
\scalebox{0.9}{
 	\begin{tabular}{|l|l|l|l|l|l|l|}
		\hline
		Problem & Region & Type & $\beta$ & $\alpha$ & Time & Refs.\\
		\hline
		\hline
		$k$-Center & Convex Polygons & Max & 1 & $2+\epsilon$ & $O(N+\frac{1}{\epsilon^2})$ & \aref{alg:kcluster}$^\ddagger$\\
		$k$-Center & Polygons & Max & $1$ & $1+\frac{\sqrt{3}}{2}$ & $O(kN)$ & \aref{alg:kclusterarbitrary}\\
		$k$-Center & Polygons & Max & $1$ & $\geq 1.822$ & $O(poly(n))$ & \cite{feder1988optimal}\\
		\hline
		$k$-Center & Polygons & Min & $O(1)$ & $1$ & $n^{O(1)}+O(\frac{1}{\epsilon^2})$ & \aref{alg:minkcenterC}$^\ddagger$\\
				\hline
	\end{tabular} }
	\end{table}

\item We have implemented our algorithms for solving (1) the maximum version of domain-restricted $k$-center of polygons of a data-set of size 4,491,143 of labeled points.
(2) the existing algorithm to solve this problem.
 We have not observed any significant loss of accuracy of the optimal solutions by our algorithms in practice.
\end{itemize}

\section{Preliminaries} \label{sec:preliminaries}

\paragraph{Bicriteria Approximation and Pseudo-Approximation}
In bicriteria approximation \cite{gonzalez2007handbook}, two objective functions are approximated simultaneously by the solution of the algorithm. The approximation factor is usually reported as a pair $(\alpha,\beta)$, where $\alpha$ is the approximation factor of the first objective function and $\beta$ is the approximation factor of the second objective function.

\paragraph{Smallest Enclosing Disk (SED)}
Given is a set $P$ of points in the plane. The aim is to place a point $c$ that
minimizes the maximum distance
to the points of $P$. Then $c$ determines the center of the {\em smallest enclosing disk} of $P$.
Megido~\cite{megiddo1983linear} showed that the smallest enclosing disk can be computed in linear time.

\paragraph{Uncertain Data}
In epsilon-geometry~\cite{salesin1989epsilon} the authors introduced a framework that provides algorithms to determine the perturbation $\epsilon$
based on the rounding errors in which the output remains correct.
If only the topology of the structure of the output needs to be kept, the value $\epsilon$ is called the {\em tolerance} of the structure~\cite{abellanas1999structural}.

In most probabilistic methods, a probability distribution is assigned to each uncertain point. These models are further divided into the {\em locational} models and the {\em existential} models. In the locational model, each uncertain point
always exists but its exact location is unknown. Then, the location of each uncertain point will be determined by a specific probability distribution, but in the existential model, each uncertain point has a
certain exact location but its existence is unknown. We refer the reader to~\cite{agarwal2018range,xue2019expected} and the references therein.

The domain-restricted uncertainty models~\cite{edalat2001convex,khanban2003computing,davari2019convex} focus on the solution set of the problem and therefore does not require an objective function, unlike the region-based models~\cite{loffler2010largest} whose results highly depend on the objective function. For example, the convex hull of the domain-restricted uncertainty model is defined by an interior region and an exterior region, where the convex hull includes the interior region and is contained in the exterior region~\cite{edalat2001convex}.

In the discrete model, each uncertain point is usually modeled by a discrete set of colored points, and the objective is selecting one point from each color such that a specific measure of points is minimized or maximized. There are several recent studies on this topic, see, e.g.,~\cite{kazemi2019approximability}.

\paragraph{Smallest Color-Spanning Disk}
Given a set of points with different colors, the goal is to find the smallest disk that contains at least one point from each color \cite{abellanas2001smallest}. There is a $O(\tau n \log n)$ time algorithm, for $\tau$ colors by computing the upper envelope of Voronoi surfaces \cite{huttenlocher1993upper,vsarir1995davenport}.

\paragraph{Colorful $k$-Center}
The colorful $k$-center is a generalization of $k$-center where each point has a color and there is a minimum requirement for each color that must be covered in the final solution. A $O(1)$-approximation $n^{O(c)}$ time algorithm for this problem with $c$ colors was given in~\cite{bandyapadhyay2019constant}.

\paragraph{$\epsilon$-Net}
For any subset $N\subset A$ and $0\leq \epsilon \leq 1$, and a shape $r$ from range space $R$, if the inequality $|r\cap A|\geq \epsilon |A|$ implies that $r$ contains at least one point of $N$, then $N$ is called an $\epsilon$-net~\cite{har2005geometric}.

When the range space is a set of disks, the number of independent samples for building an $\epsilon$-net is $O(\frac{1}{\epsilon}\log\frac{1}{\epsilon \delta})$, where $1-\delta$ is the probability of the resulting set to be an $\epsilon$-net and $\epsilon>0$ is a given constant (this is a special case of the theorem in~\cite{har2005geometric}).

For computing the set cover of a set of objects with a fixed VC dimension $d$, there exists an $O(d \log (nd) )$-approximation algorithm by using an $\epsilon$-net, where $n$ is the complexity of the optimal solution. However, for a set of
polygons of $n$ vertices, the VC dimension at least equals $2n+1$, which implies that the approximation factor of the set cover equals $O((2n+1)\log((2n+1)n)$ in the worst-case.

\paragraph{Geometric Set Cover and Geometric Hitting Set}
In the geometric set cover problem, a range space $\Sigma =(X,R)$ is given as the input, where $X$ is a set of points in $\mathbb {R} ^{d}$ and $R$ is a family of subsets of $X$.
The goal is to select a minimum-size subset $C\subseteq R$, such that any point in $X$ is inside a shape in $C$.

The geometric hitting set problem, finds the smallest subset $H\subseteq X$, such that any shape in $R$ contains at least one point of $H$.

By reduction from facility location, the problem was proved to be NP-hard for disks in 2D, and
$o(\log n)$-approximation algorithms for fat triangles, pseudo-disks, and other fat objects exist~\cite{clarkson2005improved}.
A randomized constant-factor approximation based on the concept of $\epsilon$-nets exists for the geometric hitting set and set cover of disks with $O(n~\polylog(n))$ time in $\mathbb{R}^2$ \cite{agarwal2014near,agarwal2020near}. The approximation factor of hitting set was later improved to $13.4/\epsilon$~\cite{bus2016tighter}. Several PTAS algorithms for hitting set~\cite{mustafa2010improved} and set cover~\cite{durocher2015duality} of pseudo-disks also exist.

\paragraph{Ply}
A set of $1$-dimensional intervals $S_i, i=1,\ldots,n$ are given, and $U=\cup_{i=1}^n S_i $ is the union of all input intervals (the universe).
The number of intervals $S_i,i=1,\ldots,n$ that contain a single point of the universe interval is called the ply of that universe~\cite{buchin2017folding}.

\section{Problem definition} \label{sec:def}
The $k$-center problem on points can be generalized to multiple versions on segments and polygons. In one of our generalizations, all the points of an object which is chosen as a center can cover other points. We study this model on a set of segments and call it the {\em MinMax Segment Clustering}. In the other version, only $k$ points chosen from input objects can act as centers and cover other points. We study this model on a set of polygons and call it {\em Domain-Restricted $k$-center of polygons}.

\subsection{MinMax segments clustering}%

\paragraph{\textbf{Maximum Cost}} We are given a set $S=\{s_1,\ldots,s_n\}$ of $n$ segments and we want to choose a subset $C$ of size $k$ of $S$ as the set of centers, such that the distance from any point on any segment in $S$ to its nearest center in $C$ is minimized.
In other words, the goal is to cover the points of all segments, using points from $k$ segments. An example of $1$-center of a set of segments is shown in \Cref{fig:segments_points}.

\medskip
Formally, the distance from a segment $s$ to the centers from $C$ is defined as follows:
\[
d(s,C) = \max_{p\in s} \min_{c\in C} d(p,c).
\]
The objective function of the $k$-center is therefore:
\[
\min_{\substack{C \subset S,\\|C|= k}} \max_{\substack{p \in s,\\ s_i \in S}} \min_{c \in C} d(p,c).
\]

Note that the distance defined over a set of segments (from other segments) is not a metric space.

\begin{figure}[h]
\vspace*{4mm}
	\centering
	\begin{minipage}{0.48\textwidth}%
		\centering
		\includegraphics[scale=1]{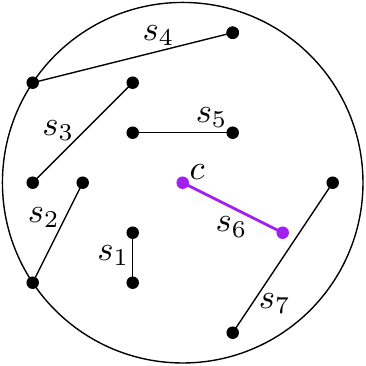}\vspace*{-1mm}%
		\caption{The maximum $1$-center of a set of segments with $C=\{s_6\}$. Note that  the distance from any point on any segment to $s_6$ is minimized. The point $c$ is the center of $1$-center.}
		\label{fig:segments_points}
	\end{minipage}%
	\hfill
	\begin{minipage}{0.48\textwidth}%
		\centering
		\includegraphics[scale=1]{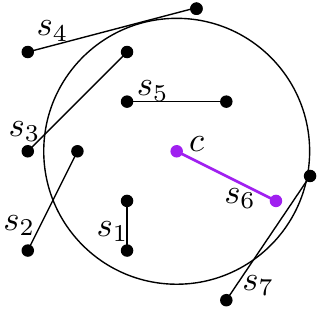}\vspace*{-1mm}
		\caption{The minimum $1$-center of a set of segments with $C=\{s_6\}$. Note that the distance from at least one point of each segment to $s_6$ is minimized. The point $c$ determines the center of the 1-center.}
		\label{fig:min_segments}
	\end{minipage}\vspace*{-4mm}
\end{figure}

\paragraph{\textbf{Minimum Cost}}
For a given set of segments, the goal of the {\em minimum $k$-center of segments} is to choose $k$ of these segments such that the distance from at least one point of each segment to its nearest center is at most $r$, and $r$ is minimized.
The objective function of the minimum $k$-center of a set $S$ of segments is as follows:
\[
\min_{\substack{C \subset S,\\|C|\leq k}} \max_{s \in S} \min_{p \in s} \min_{c \in C} d(p,c).
\]

See \Cref{fig:min_segments} for an example of $1$-center of a set of segments.

\subsection{Domain-restricted $k$-center of polygons}

\paragraph{\textbf{Maximum Cost}}
Let $Q=\{Q_1,\ldots,Q_n\}$ be a set of polygons.
The maximum $k$-center of polygons %
solves the $k$-center problem on a set $Q$ of polygonal shapes by finding a set $C$ of $k$ points inside the input polygons as centers such that the maximum distance from each point inside these polygons to its nearest center is minimized. An example of $2$-center of polygons is shown in \Cref{fig:polygons}.

\medskip
Formally, for a set $Q$ of polygons, the radius of clustering is defined as:
\[
\max_{Q_i\in Q} \max_{p\in Q_i} \min_{c\in C} d(c,p),
\]
and the objective function of the $k$-center of $Q$ is:
\[
\min_{\substack{C\subset (\cup_i Q_i),\\|C|= k}} \max_{\substack{p\in Q_i,\\Q_i\in Q}} \min_{c\in C} d(c,p).
\]
In the domain-restricted $k$-center of polygons, the set of centers must lie inside the polygons. %

\begin{figure}[h]
	\centering
	\begin{minipage}{0.45\textwidth}%
		\centering
		\includegraphics[scale=1]{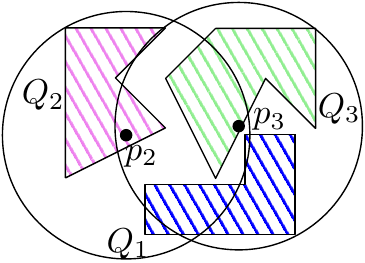}
		\caption{The maximum $2$-center of three polygons $Q_1,Q_2$ and $Q_3$ with $C=\{p_2,p_3\}$. The maximum distance from each point inside $Q_1,Q_2$ and $Q_3$ to its nearest center is minimized. }
		\label{fig:polygons}
	\end{minipage}%
	\hfill
	\begin{minipage}{0.45\textwidth}%
		\centering
		\includegraphics[scale=1]{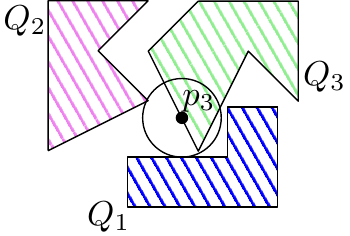}
		\caption{The minimum $1$-center of three polygons with $C=\{p_3\}$. The maximum distances from at least one point inside each of $Q_1,Q_2$ and $Q_3$ to the center $p_3$ are minimized.} %
		\label{fig:poly2}
	\end{minipage}\vspace*{-2mm}
\end{figure}

\paragraph{\textbf{Minimum Cost}}
If we require at least one point of each input polygon to be covered by $k$ centers chosen from points inside the input polygons, we call the problem {\em minimum $k$-center of polygons}. %
An example of $1$-center of polygons is shown in \Cref{fig:poly2}.

\medskip
Formally, the objective function of the minimum $k$-center of a set of polygons $Q$ is:
\[
\min_{\substack{C\subset (\cup_i Q_i),\\|C|= k}} \max_{Q_i\in Q} \min_{p\in Q_i} \min_{c\in C} d(c,p),
\]

\subsection{Multi-interval set cover}
We define a restricted version of the geometric set cover problem over a set of intervals, and call this special case {\em multi-interval set cover}. Formally, the problem is defined as follows:

\begin{definition}[Multi-Interval Set Cover]\label{def:multi}
A set of $n$ sets of intervals $Q_i, i=1,\ldots,n$ are given. The goal of multi-interval set cover is to find the minimum size subset of sets $Q_i, i=1,\ldots,n$, such that $\cup_{i=1}^n \cup_{s\in Q_i} s$ is covered.

\medskip
Two differences between multi-interval set cover problem and the well-known geometric set cover of intervals are first in the elements of the universe, which in the geometric set cover of intervals are points, and in the multi-interval set cover problem are intervals; and second, in the number of intervals in each set: each set in the multi-interval set cover is a set of intervals, while each set in the geometric set cover of intervals is a single interval (and therefore a continuous set of points).
\end{definition}
We introduce this problem and prove that it is NP-hard in \Cref{theorem:npc}. Also, we give approximation algorithms for this problem, using a reduction from set cover in \Cref{sec:intervalcover}.

\section{MinMax segments clustering}
Here, we discuss the $k$-center of a set of segments, where all the points on a set of segments must be within distance at most $r$ to their nearest point on a subset of size $k$ of the segments, such that $r$ is minimized.

\paragraph{The Hardness of MinMax Segment Clustering} %
Observe that the distance between the line segments is not symmetric as we have illustrated an example in \Cref{fig:nonsym}. Recall that we define the distance between a segment $s_i$ to $s_j$ as the distance between the furthest point of $s_i$ to the closest point of $s_j$.
Also, the triangle inequality no longer holds for this problem; consider a line segment $s_i$ and two points (degenerate segments) $s_j$ and $s_k$ on a line which is intersecting $s_i$, so that each of $s_j$ and $s_k$ lies on one side of $s_i$, and $d(s_i,s_j)$ and $d(s_i,s_k)$ is a very small positive constant, as we have illustrated in \Cref{fig:triangle_inequality}.
Observe that $d(s_j,s_k)>d(s_i,s_j)+d(s_i,s_k)$.
 Therefore, the proof of metric $k$-center (including Euclidean $k$-center) no longer applies to this case, and we need to prove the approximation factor.

\begin{figure}[h]
\vspace*{-1mm}
	\centering
	\begin{minipage}{0.40\textwidth}%
		\centering
		\includegraphics[scale=1]{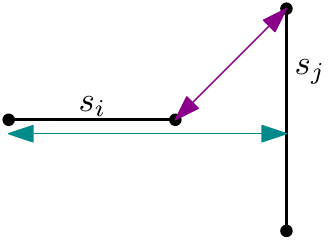}\vspace*{-1mm}
		\caption{The distance $d(s_i,s_j)$ (shown in green) does not equal to $d(s_j,s_i)$ (shown in purple).}
		\label{fig:nonsym}
	\end{minipage}%
	\hfill
	\begin{minipage}{0.50\textwidth}%
		\centering
		\includegraphics[scale=1]{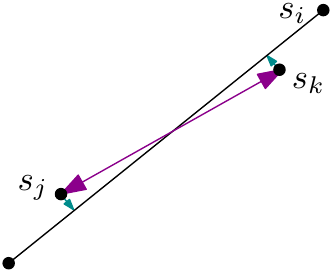}\vspace*{-1mm}
	\caption{The distance $d(s_j,s_k)$ (shown in purple) is larger than $d(s_i,s_j)+d(s_i,s_k)$; and the triangle inequality does not hold.}
\label{fig:triangle_inequality}
	\end{minipage}\vspace*{-6mm}
\end{figure}

\subsection{The maximum $k$-center}

\subsubsection{The hardness of maximum $k$-center}
We define a version of the set cover problem for intervals and prove its hardness. Then, we use it to prove the hardness of $k$-center of segments beyond the lower bound 1.822 from the $k$-center of a set of points~\cite{feder1988optimal}, which holds for polygons because a point is a degenerate polygon.

\begin{figure}[h]
\vspace*{1mm}
	\centering
	\includegraphics[scale=1]{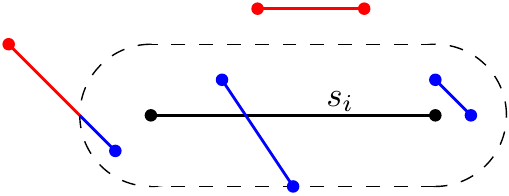}\vspace*{-1mm}
	\caption{The Minkowski sum of a segment $s_i$ with a disk of radius $r$. Only the blue segments are covered by selecting $s_i$ as a center. }
	\label{fig:seg_cluster}\vspace*{-1mm}
\end{figure}

In \Cref{alg:reduction}, $\oplus$ denotes the Minkowski sum.
\begin{algorithm}[h]
	\caption{Reduction from $k$-Center of Segments to Multi-Interval Set Cover}
	\label{alg:reduction}
	\begin{algorithmic}[1]
		\Require{A set of segments $S$, an integer $k$, a constant $r$}
		\Ensure{$k$ segments as centers}
		\For{$s_i \in S$}
		\For{$s_j \in S$}
		\State{$Q_j\gets Q_j \cup \{$ the part of $s_j$ inside $s_i\oplus$ disk of radius $r\}$.}
		\EndFor
		\EndFor
		\State{$C$= the solution of multi-interval set cover with $\{Q_i\}_{i=1}^{n}$ as sets and $S$ as the universal set.}
		\State{return $C$}
	\end{algorithmic}
\end{algorithm}\vspace*{1mm}

\begin{lemma}\label{lemma:reduction}
	The time complexity of \Cref{alg:reduction} is $O(n^2+T(n))$, where $T(n)$ is the time complexity of the multi-interval set cover.
\end{lemma}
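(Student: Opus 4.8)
The plan is to charge the running time to three pieces — the nested loop of lines 1--5, the single set-cover call on line 6, and the trivial return on line 7 — and then add them up.

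The heart of the argument is a geometric observation about line 3. For a fixed segment $s_i$, the region $s_i \oplus D_r$ (with $D_r$ a disk of radius $r$) is the \emph{stadium} swept by $D_r$ along $s_i$: it is convex, and its boundary consists of exactly two straight edges parallel to $s_i$ (one on each side, at distance $r$) together with two circular arcs of radius $r$ centered at the endpoints of $s_i$ (compare \Cref{fig:seg_cluster}). Since this region is convex, its intersection with the line through $s_j$ is a single (possibly empty) interval, so ``the part of $s_j$ inside $s_i \oplus D_r$'' is a single sub-segment of $s_j$. To compute it I would clip $s_j$ against the two half-planes defined by the straight edges and against the two disks bounding the arcs; each clip requires solving $O(1)$ linear or quadratic equations and contributes $O(1)$ parameter values along $s_j$, and merging these values to read off the surviving sub-segment is $O(1)$. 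Hence each execution of line 3 takes $O(1)$ time. The two nested \textbf{for}-loops over $S$ therefore do $O(n^2)$ work in total (and, as a byproduct, the sets produced satisfy $\sum_{j}|Q_j| = O(n^2)$ intervals altogether, so even writing down the set-cover instance costs $O(n^2)$).

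Line 6 runs the multi-interval set cover on $\{Q_i\}_{i=1}^{n}$ with $S$ as the universe, which by definition costs $T(n)$, and line 7 is $O(1)$. Summing the three contributions yields the stated bound $O(n^2 + T(n))$. The only step that requires genuine care is the $O(1)$ bound for a single intersection: one has to justify that the stadium boundary has constant descriptive complexity — two segments plus two arcs — and that convexity prevents the clipped portion of $s_j$ from breaking into more than one piece; once that is in hand, the rest is a routine addition of running times.
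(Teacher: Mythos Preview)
Your proposal is correct and follows essentially the same approach as the paper: bound each iteration of the nested loop by $O(1)$ via the constant-complexity stadium $s_i\oplus D_r$, sum over the $O(n^2)$ iterations, and add $T(n)$ for the set-cover call. The paper's own proof is terser---it simply asserts that the Minkowski sum is $O(1)$ and that intersecting it with the remaining $n-1$ segments is $O(n)$---whereas you spell out the convexity argument and the clipping details that make each intersection $O(1)$; but the structure and the key idea are the same.
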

\begin{proof}
	Computing the Minkowski sum of a segment with a disk of radius $r$ takes $O(1)$ time (see \Cref{fig:seg_cluster} for an illustration), and finding the intersections between the resulting shape with $n-1$ segments takes $O(n)$ time.
	 The nested for loops are repeated $O(n^2)$ times, each with $O(1)$ time. The last step of the algorithm runs an instance of the multi-interval set cover.
	So, the overall time complexity of the algorithm is $O(n^2+T(n))$.
\end{proof}

\begin{theorem}\label{theorem:aprx1}
	\Cref{alg:reduction} computes a solution with $\alpha k$ centers, if $C$ is an $\alpha$-approximation of multi-interval set cover.
\end{theorem}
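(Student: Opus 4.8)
The plan is to show that the reduction in \Cref{alg:reduction} preserves approximation factors for the *number of clusters*, i.e., that a feasible solution to the $k$-center-of-segments instance at radius $r$ corresponds exactly to a cover in the constructed multi-interval set cover instance, so that an $\alpha$-approximate cover yields a $k$-center solution using at most $\alpha k$ centers (still at radius $r$). The core observation is a bijection-style equivalence: selecting segment $s_i$ as a center covers exactly the points of $S$ lying within distance $r$ of $s_i$, which is precisely $S \cap (s_i \oplus D_r)$, where $D_r$ is the disk of radius $r$; and for each $j$ the set $Q_j$ built by the inner loop collects exactly the portions of $s_j$ coverable by each candidate center. Hence the universe $\bigcup_j \bigcup_{s \in Q_j} s$ that the multi-interval set cover must cover is exactly the union of all portions of all segments that are coverable by *some* center, and a sub-collection of the $Q_i$'s covers this universe if and only if the corresponding sub-collection of segments, used as centers, covers all of $S$ within radius $r$.

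First I would fix the value $r = r_{\mathrm{OPT}}$ equal to the optimal radius for the maximum $k$-center of $S$ (or, for a decision-style argument, an arbitrary $r$ for which a feasible size-$k$ solution exists), and record that an optimal set $C^* \subseteq S$ of $k$ centers, by feasibility, covers every point of every segment of $S$. Next I would translate this into the constructed instance: for each $s_i \in C^*$, the set $Q_i$ (as a set of intervals along the segments, indexed by $j$) together over all $i \in C^*$ covers the whole universe, because any point $p$ on any $s_j$ is within distance $r$ of some $s_i \in C^*$, hence $p \in s_i \oplus D_r$, hence $p$ lies in the interval that line~3 of \Cref{alg:reduction} added to $Q_j$ when processing $s_i$. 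Therefore the multi-interval set cover instance admits a cover of size at most $k$, so its optimum is at most $k$, and an $\alpha$-approximation returns a sub-collection $\mathcal{C}$ of size at most $\alpha k$. Conversely, I would argue that any cover $\mathcal{C}$ of the universe, read back as the set of segments $C = \{s_i : Q_i \in \mathcal{C}\}$, is a feasible $k$-center solution at radius $r$: every point $p$ on every $s_j$ belongs to some chosen interval $s \in Q_i$ with $Q_i \in \mathcal{C}$, and by construction that interval is $s_j \cap (s_i \oplus D_r)$, so $d(p, s_i) \le r$, i.e., $p$ is covered by the center $s_i \in C$. This gives $|C| \le |\mathcal{C}| \le \alpha k$ centers achieving radius $r = r_{\mathrm{OPT}}$, which is the claim.

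The only genuinely delicate point is the correspondence between "portion of $s_j$ inside $s_i \oplus D_r$" and "points of $s_j$ within distance $r$ of the center $s_i$": since in the MinMax segment model an entire chosen segment acts as the center, the relevant distance is $d(p, s_i) = \min_{c \in s_i} d(p,c)$, and $d(p,s_i) \le r \iff p \in s_i \oplus D_r$ holds exactly, so $Q_j$ really does capture all and only the coverable points of $s_j$. I would also note the one bookkeeping subtlety that $s_i \oplus D_r$ intersects $s_j$ in at most one sub-interval (the Minkowski sum is convex, \Cref{fig:seg_cluster}), so each entry of $Q_j$ is a single interval and the instance is a legitimate multi-interval set cover instance in the sense of \Cref{def:multi}; no extra union-of-intervals handling is needed. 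With these observations the equivalence is immediate and the $\alpha k$ bound follows; I do not expect any real obstacle beyond carefully stating this equivalence in both directions.
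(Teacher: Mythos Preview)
Your argument is correct and is essentially the paper's own proof, only spelled out more carefully: both establish the correspondence between feasible center-sets of radius $r$ and covers in the constructed multi-interval instance, then read off the $\alpha k$ bound. The one wrinkle is an index typo in \Cref{alg:reduction} itself (line~3 should update $Q_i$, not $Q_j$, so that $Q_i$ collects the pieces of every $s_j$ covered by center $s_i$); you tacitly switch to this intended meaning in your second paragraph, which is exactly the interpretation the paper's proof uses.
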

\begin{proof}
The set $Q_i$ is the set of segments covered by radius $r$ of a segment $s_i$. So, the minimum number of the sets $Q_i, i=1,\ldots,n$ that covers all other segments is the set of centers $s_i$ with the same indices. So, the number of centers is the number of sets in $C$. So, an $\alpha$-approximation for multi-interval set cover, is a solution with $\alpha k$ centers and radius $r$.
\end{proof}

Based on \Cref{alg:reduction}, in the clustering problem for segments, the ply of the intervals is equal to the number of segments that can cover the same point on a segment.

\subsubsection{The hardness of multi-interval set cover}
In this case, each member of the universal set is an interval and each set of covers is also a set of intervals.

\begin{figure}[h]
	\centering
	\includegraphics[scale=1.1]{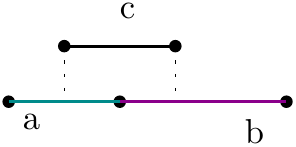}\vspace*{-2mm}
	\caption{$1$-Dimensional intervals $a,b,c$, such that $c\subset (a\cup b)$.}
	\label{fig:intervals}
\end{figure}

For three intervals $a,b,c$ such that $c$ is a subset of the union of $a$ and $b$ (See \Cref{fig:intervals}), we have:
\[
\{a\} \cup \{b\} = \{a\} \cup \{b\} \cup \{c\}.
\]
Therefore, the set cover of sets of intervals is different from the set cover of sets of points.

\begin{theorem}\label{theorem:npc}
The	multi-interval set cover is NP-complete.
\end{theorem}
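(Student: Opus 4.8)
The plan is to show membership in NP and then NP-hardness by a reduction from the classical (combinatorial) Set Cover problem. Membership is the easy direction: given a purported solution, i.e., a subfamily of the sets $Q_i$, we can in polynomial time form the union of all intervals in the chosen sets, sort the endpoints, sweep once to test whether this union covers $\bigcup_{i}\bigcup_{s\in Q_i}s$, and compare the size of the subfamily against the target. So the real work is the hardness reduction.

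For hardness I would start from an instance of Set Cover with ground set $U=\{e_1,\ldots,e_m\}$ and sets $T_1,\ldots,T_n\subseteq U$, and build a multi-interval set cover instance as follows. Assign to each element $e_j$ a short, pairwise-disjoint ``gadget interval'' $I_j$ on the real line — say $I_j=[3j,3j+1]$ — so that the universe of intervals to be covered is $\bigcup_{j=1}^m I_j$ (note the remark in the excerpt that the universe here consists of intervals, not points, which is exactly what makes disjoint gadget intervals the right encoding). For each set $T_i$ define $Q_i=\{\,I_j : e_j\in T_i\,\}$. Then a subfamily $\{Q_i\}_{i\in A}$ covers the whole universe $\bigcup_j I_j$ if and only if $\bigcup_{i\in A}\{\,j:e_j\in T_i\,\}=\{1,\ldots,m\}$, i.e., if and only if $\{T_i\}_{i\in A}$ is a set cover of $U$; and the sizes match exactly. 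Since the gadget intervals are pairwise disjoint, an interval $I_j$ is covered by a subfamily precisely when it belongs to one of the chosen $Q_i$ — there is no ``assembling one interval from pieces of several'' subtlety, which keeps the correspondence tight in both directions. The reduction is clearly polynomial. Combined with membership in NP, this proves NP-completeness.

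The main thing to be careful about — and what I expect to be the only real obstacle — is the phenomenon highlighted just before the theorem in Figure~\ref{fig:intervals}: in this problem a target interval can be covered by the \emph{union} of several chosen intervals even if no single chosen interval contains it, so naive reductions can have a feasible multi-interval cover that does not correspond to any set cover. Using pairwise-disjoint gadget intervals of positive length sidesteps this entirely, because then ``$I_j\subseteq$ union of chosen intervals'' forces $I_j$ itself to be one of the chosen intervals. An alternative, if one prefers to mirror the paper's later reduction machinery, is to reduce from Vertex Cover or Dominating Set instead, but Set Cover gives the cleanest statement and also immediately yields the inapproximability claim used in \Cref{sec:intervalcover}, since the reduction is approximation-preserving (optimum values are equal). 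I would finish by noting that because the gadget intervals are a fixed set of $m$ disjoint intervals, the ``ply'' of the constructed instance is $1$ at every point of the universe, so the hardness already holds for extremely restricted instances — a remark worth including to set up the later algorithmic discussion.
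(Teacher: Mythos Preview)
Your proof is correct and follows essentially the same route as the paper: both show NP membership via a sweep over sorted endpoints, and both prove hardness by reducing from Set Cover, mapping each ground-set element to a unit interval and each set to the corresponding collection of intervals (the paper uses adjacent intervals $[i-1,i]$ with universe $[0,n]$, you use spaced intervals $[3j,3j+1]$; either choice avoids the ``assembling from pieces'' issue since every interval in the instance is one of the $m$ gadget intervals). One small correction to your closing remark: the ply of your constructed instance is \emph{not} $1$ in general --- a point in $I_j$ lies in as many intervals as there are sets $T_i$ containing $e_j$, i.e., the ply equals the element's frequency in the Set Cover instance, which is exactly why the later $f$-approximation discussion is relevant.
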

\begin{proof}
	The decision version of multi-interval set cover can be solved by sorting the intervals in the sets and sweeping the intervals in the universe while keeping the last covered point of the interval. So, the problem is in NP. It remains to prove the NP-hardness.
	
\medskip
	We proceed by a reduction from set cover.
	Let $U$ be the universal set and $\mathcal{S}_1,\ldots,\mathcal{S}_n$ be the set of sets in the set cover instance.
	Map the points of $U$ to the interval $[0,n]$ on the real line, where the interval $[i-1,i]$ corresponds to the $i$-the member of $U$.
	Any set $\mathcal{S}_i$ can then be represented as a set of intervals; see \Cref{fig:seg_cluster} for an illustration.
	This is an instance of the multi-interval set cover problem with $[0,n]$ as the universe and the intervals of $\mathcal{S}_i, i=1,\ldots,n$ as the sets of intervals.
	Conversely, it is easy to see that the optimal solution of this instance of multi-interval set cover results in an optimal solution of the set cover. So, the current multi-interval set cover instance is equivalent to set cover.
\end{proof}

\begin{theorem}\label{theorem:preserving}
	The reduction of \Cref{theorem:npc} is approximation-preserving.
\end{theorem}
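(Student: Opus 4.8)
The plan is to show that the reduction from set cover to multi-interval set cover described in the proof of \Cref{theorem:npc} preserves solution sizes exactly, so that any $\alpha$-approximate solution to the multi-interval set cover instance pulls back to an $\alpha$-approximate solution to the original set cover instance, and conversely. First I would recall the reduction: given a set cover instance with universe $U=\{u_1,\dots,u_m\}$ and sets $\mathcal{S}_1,\dots,\mathcal{S}_n$, we map $u_i$ to the unit interval $[i-1,i]$, so $U$ corresponds to $[0,m]$, and each $\mathcal{S}_j$ becomes the set of intervals $Q_j=\{[i-1,i]\mid u_i\in\mathcal{S}_j\}$. The universe of the multi-interval instance is the collection $\{[i-1,i]\}_{i=1}^m$ viewed as atomic elements to be covered (equivalently, the union $[0,m]$, since covering the interval $[0,m]$ with unions of these unit intervals is the same as covering all $m$ unit blocks).

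The key step is the bijection between feasible solutions on the two sides, respecting cardinality. For a subcollection $I\subseteq\{1,\dots,n\}$, I would argue that $\{\mathcal{S}_j\}_{j\in I}$ covers $U$ if and only if $\bigcup_{j\in I}\bigcup_{s\in Q_j}s = [0,m]$: indeed $u_i\in\bigcup_{j\in I}\mathcal{S}_j$ iff $[i-1,i]\in Q_j$ for some $j\in I$ iff the open interval $(i-1,i)$ is contained in $\bigcup_{j\in I}\bigcup_{s\in Q_j}s$. Since $|I|$ is the same quantity counted on both sides, the optimum values coincide, $\mathrm{OPT}_{\mathrm{SC}} = \mathrm{OPT}_{\mathrm{MISC}}$, and any solution of value $t$ for one instance yields a solution of value $t$ for the other. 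Hence if an algorithm returns a multi-interval set cover of size at most $\alpha\cdot\mathrm{OPT}_{\mathrm{MISC}}$, the corresponding $\{\mathcal{S}_j\}$ has size at most $\alpha\cdot\mathrm{OPT}_{\mathrm{SC}}$, which is exactly the statement that the reduction is approximation-preserving. Combined with the known inapproximability of set cover, this also gives the $\Omega(\log n)$ hardness of approximation for multi-interval set cover claimed in the contributions.

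The main subtlety — not really an obstacle, but the point that must be handled carefully — is the matter of endpoints and the semantics of ``covering'' an interval universe: adjacent unit blocks $[i-1,i]$ and $[i,i+1]$ share the point $i$, so I would either declare the atomic elements of the universe to be the (relatively) closed unit intervals and say a unit interval is covered when it is contained in the union of the chosen intervals, or equivalently pass to open intervals $(i-1,i)$ to avoid any ambiguity at shared endpoints; in both readings the equivalence above is immediate and the shared endpoints carry no covering obligation that is not already implied. A second point worth a sentence is that the reduction runs in polynomial time (it was already observed in \Cref{theorem:npc} that each $\mathcal{S}_j$ maps to at most $m$ intervals), which together with the exact correspondence of solution values establishes that it is an $L$-reduction (in fact an exact, cost-preserving reduction), and therefore approximation-preserving in the strong sense.
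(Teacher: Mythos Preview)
Your proposal is correct and follows essentially the same approach as the paper: both argue that the reduction of \Cref{theorem:npc} sets up a size-preserving correspondence between feasible solutions, so that the objective value is preserved and hence any approximation factor transfers. Your write-up is considerably more careful than the paper's very terse argument (which simply observes that each universe element becomes an interval and that the objective function is therefore preserved); in particular, your explicit bijection between index sets $I$, the remark that $\mathrm{OPT}_{\mathrm{SC}}=\mathrm{OPT}_{\mathrm{MISC}}$, and the discussion of shared endpoints are all refinements that the paper omits, but the underlying idea is the same.
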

\begin{proof}
	In the multi-interval set cover, the objective function is the number of segments in the universal set that are covered by the chosen sets. Each member of the universe in the set cover is an interval in the multi-interval set cover in the reduction of \Cref{theorem:npc}.
	Therefore, the objective function of set cover is preserved by the reduction.
\end{proof}

\subsubsection*{Approximating multi-interval set cover}\label{sec:intervalcover}

Based on \Cref{theorem:npc}, the best possible approximation factor for this problem is $\Omega(\log n)$, as well~\cite{vazirani2013approximation}. For low frequency set cover, $f$-approximation, where $f$ is the maximum frequency of an element is possible~\cite{vazirani2013approximation}.\medskip

\begin{algorithm}[h]
	\caption{Approximate Multi-Interval Set Cover}
	\label{alg:intervalcover}
	\begin{algorithmic}[1]
		\Require{A set of 1D intervals $Q_i,i=1,\ldots,n$}
		\Ensure{A subset of $Q$}
		\State{$U$= the set of disjoint intervals with endpoints from the endpoints of intervals in $Q_i, i=1,\ldots,n$}
		\State{$S_i$= the set of intervals in $U$ that are covered by the intervals of $Q_i,i=1,\ldots,n$}
		\State{Solve set cover for $S_i,i=1,\cdots,n$}
		\\ \Return{the indices of the sets from the previous step.}
	\end{algorithmic}
\end{algorithm}

\begin{theorem}\label{theorem:intervalsetcover}
	\Cref{alg:intervalcover} solves the multi-interval set cover problem with the same approximation factor as the set cover.
\end{theorem}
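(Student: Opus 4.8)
The plan is to show that \Cref{alg:intervalcover} is an \emph{exact, cost-preserving} reduction from multi-interval set cover to the ordinary (discrete) set cover: whatever approximation ratio $\alpha$ is invoked at Line~3 is passed through unchanged, because the two instances have the same feasible solutions with the same costs.

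First I would set up the atomic decomposition. Let $E$ be the sorted set of all endpoints of all intervals occurring in $Q_1,\dots,Q_n$, and (taking intervals closed, which is the form produced both by \Cref{alg:reduction} and by the reduction of \Cref{theorem:npc}) let $U$ be the family of closed intervals $[e_j,e_{j+1}]$ between consecutive elements of $E$ that are contained in the universe $\bigcup_{i=1}^n\bigcup_{s\in Q_i}s$; this is exactly what Line~1 computes, and $|U|$ is linear in the total number of input intervals. The structural fact driving everything is \emph{atomicity}: for every $u\in U$ and every index $i$, either $u\subseteq\bigcup_{s\in Q_i}s$ or the interiors of $u$ and $\bigcup_{s\in Q_i}s$ are disjoint, because every endpoint of every interval of $Q_i$ lies in $E$ and hence cannot sit in the interior of $u$, so the membership indicator of $\bigcup_{s\in Q_i}s$ is constant on $\mathrm{int}(u)$ (the finitely many shared endpoints are absorbed by passing to closures). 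This makes $S_i$ in Line~2 well defined, namely $S_i=\{u\in U: u\subseteq\bigcup_{s\in Q_i}s\}$, and it further gives $\bigcup_{s\in Q_i}s=\bigcup_{u\in S_i}u$; in particular $\bigcup_{i=1}^n\bigcup_{s\in Q_i}s=\bigcup_{u\in U}u$.

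Next I would prove the feasibility correspondence: a subcollection $\mathcal C\subseteq\{1,\dots,n\}$ covers the universe, i.e.\ $\bigcup_{i\in\mathcal C}\bigcup_{s\in Q_i}s=\bigcup_{i=1}^n\bigcup_{s\in Q_i}s$, if and only if $\bigcup_{i\in\mathcal C}S_i=U$. For the forward implication, each $u\in U$ lies in the universe, hence in some $\bigcup_{s\in Q_i}s$ with $i\in\mathcal C$, hence $u\in S_i$ by atomicity; for the reverse, $\bigcup_{i\in\mathcal C}\bigcup_{s\in Q_i}s=\bigcup_{i\in\mathcal C}\bigcup_{u\in S_i}u=\bigcup_{u\in U}u$, the whole universe. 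Since the cost of $\mathcal C$ is literally $|\mathcal C|$ in both problems, this is a cost-preserving bijection between feasible solutions; therefore the optimal value of the multi-interval instance equals that of the set-cover instance $(U,\{S_i\}_{i=1}^n)$, and any set cover of cost $\le\alpha\cdot\mathrm{OPT}$ returned at Line~3 yields, via the returned indices, a multi-interval set cover of cost $\le\alpha\cdot\mathrm{OPT}$. I would close by noting that Lines~1--2 run in polynomial time (sort $E$, then scan to decide for each $i$ which atomic intervals lie in $\bigcup_{s\in Q_i}s$), so the reduction contributes nothing to the ratio; together with \Cref{theorem:npc} this pins the achievable ratio at $\Theta(\log n)$.

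I expect the only genuine friction to be the boundary bookkeeping in the atomicity step — reconciling open versus closed versus degenerate intervals and the finitely many shared endpoints — which is handled uniformly by observing that a finite point set has no effect on which subcollections cover the universe (equivalently, by working with closures of the atomic intervals); every other step is a direct translation between the two problems.
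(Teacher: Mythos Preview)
Your proposal is correct and follows essentially the same approach as the paper: both arguments show that the constructed set-cover instance $(U,\{S_i\})$ has the same feasible index sets, with the same cost, as the original multi-interval instance, by verifying that $\bigcup_{u\in U}u=\bigcup_i\bigcup_{q\in Q_i}q$ and $\bigcup_{u\in S_i}u=\bigcup_{q\in Q_i}q$. Your write-up is considerably more careful than the paper's --- you make the atomicity property explicit and address the open/closed boundary bookkeeping, which the paper glosses over --- but the underlying idea and structure are the same.
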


\begin{proof}
	Based on the construction of set $U$, the covered intervals are the same as that of $\cup_{i=1}^n Q_i$, i.e. $\cup_{u\in U} u=\cup_{i=1}^n \cup_{q\in Q_i} q$.
	The construction of $S_i$ also shows that $\cup_{s\in S_i} s=\cup_{q\in Q_i} q$.
	So, a set cover for $\cup_{u\in U} u$ and $S_i, i=1,\ldots,n$ is a set cover for $\cup_{i=1}^n \cup_{q\in Q_i} q$ and $Q_i$, for $i=1,\cdots,n$.
	
\medskip
	The opposite also holds, since the unions of intervals in each of the discussed sets are equal to their corresponding sets in the other instance.
	As a result, the objective function, the solution set and the approximation factor of these problems are preserved.
\end{proof}
\begin{lemma}\label{lemma:cover}
	\Cref{alg:reduction} takes $O(n^3)$ time, if the $O(\log n)$-approximation algorithm \cite{vazirani2013approximation} is used.
\end{lemma}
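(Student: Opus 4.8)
\smallskip\noindent\emph{Proof plan.}
The idea is to reduce the statement to a size analysis of the set-cover instance built inside \Cref{alg:reduction}. By \Cref{lemma:reduction} the running time of \Cref{alg:reduction} is $O(n^2+T(n))$, where $T(n)$ is the time to solve the multi-interval set cover instance it creates, so it suffices to show that, when this instance is solved via \Cref{alg:intervalcover} with the greedy $O(\log n)$-approximation of \cite{vazirani2013approximation}, we have $T(n)=O(n^3)$; the total is then $O(n^2+n^3)=O(n^3)$.

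First I would bound the instance. For a fixed $j$, each pass of the inner loop of \Cref{alg:reduction} adds to $Q_j$ at most one interval, namely the connected intersection of $s_j$ with the convex region $s_i\oplus D$, where $D$ is the disk of radius $r$; hence $|Q_j|\le n$, and the sets $Q_1,\dots,Q_n$ together contain $O(n^2)$ intervals with $O(n^2)$ distinct endpoints. Sorting these endpoints in step 1 of \Cref{alg:intervalcover} produces the refined universe $U$ of $O(n^2)$ pairwise-disjoint atomic intervals in $O(n^2\log n)$ time. Step 2 replaces each $Q_i$ by the subset $S_i\subseteq U$ of atoms it covers, computed by a single merge of the sorted intervals of $Q_i$ against $U$; since $|S_i|\le|U|=O(n^2)$, this gives $\sum_i|S_i|=O(n^3)$, with all the $S_i$ built within that bound.

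Next I would bound the greedy step. What remains is an ordinary set-cover instance on $n$ sets over a universe of $O(n^2)$ elements, of total size $N=\sum_i|S_i|=O(n^3)$. The plan is to run greedy with the standard incremental implementation: keep, for each set, a counter of the elements it still leaves uncovered; store the sets in a bucket array indexed by counter value; and, whenever an element is newly covered, walk its list of containing sets and decrement the affected counters, moving sets between buckets in $O(1)$ time each. Because the counters never increase, the pointer tracking the current maximum bucket only moves downward, so over the at most $n$ rounds the total cost is $O(|U|+N)=O(n^3)$. Therefore $T(n)=O(n^3)$, and \Cref{alg:reduction} runs in $O(n^2+n^3)=O(n^3)$ time.

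The part I expect to be the main obstacle is the running time of the greedy step, not its correctness: a refined set $S_i$ may have size $\Theta(n^2)$, so the naive greedy that rescans every set in every round would cost $\Theta(n^4)$. The argument therefore has to rest on the reverse-index / bucket-queue implementation, whose cost is linear in the instance size $O(\sum_i|S_i|)=O(n^3)$, and then verify that the surrounding work --- the $O(n^2)$ preprocessing already charged in \Cref{lemma:reduction}, the $O(n^2\log n)$ endpoint sort, and the construction of $U$ and of the $S_i$ --- is all dominated by $O(n^3)$.
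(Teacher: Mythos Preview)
Your argument is correct and follows the same overall route as the paper: bound the size of the refined set-cover instance produced by \Cref{alg:intervalcover}, then charge the greedy algorithm against that size. The paper's proof is terser: it notes $|U|=O(n^2)$, observes there are at most $n$ greedy rounds, and asserts that ``removing the elements of the universe covered at each step takes $O(n^2)$ time,'' multiplying to $O(n^3)$. Your version is more careful on exactly the point you flagged as the obstacle: you make explicit that $\sum_i|S_i|=O(n^3)$ and invoke the reverse-index/bucket-queue implementation so that the total greedy cost is $O(|U|+\sum_i|S_i|)=O(n^3)$, rather than relying on a per-round $O(n^2)$ bound that does not obviously account for selecting the maximum-coverage set. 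The two arguments reach the same bound; yours simply fills in the implementation detail that the paper leaves implicit.
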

\begin{proof}
	The universe is the union of the intervals resulting from cutting each of the $n$ input intervals at the endpoints of other $n-1$ intervals. So, it has size at most $O(n^2)$.
	The number of steps of the set cover algorithm is at most $n$, since at each step of the algorithm at least one set is chosen.
	Removing the elements of the universe covered at each step takes $O(n^2)$ time.
	So, the overall time complexity of the algorithm is $O(n^3)$.
\end{proof}
The time complexity of the algorithm for low frequency (equivalent to low ply in this setting) is the complexity of solving the corresponding linear program~\cite{vazirani2013approximation}, therefore, it is different from the time complexity of the $O(\log n)$-approximation algorithm discussed in \Cref{lemma:cover}.

\subsubsection{Approximating the radius of $k$-center}

\paragraph{1-center} In the case where the input is a set of segments and $k=1$, the problem is equivalent to compute the SED of the segments whose center is restricted to lie on a segment. The SED in this case be determined by the endpoints of the segments.

\begin{algorithm}[h]
	\caption{$1$-Center}
	\label{alg:1center}
	\begin{algorithmic}[1]
		\Require{A set of segments $S$}
		\Ensure{A center from $P$}
		\For{$p\in P$}
		\State{$d_p=\max_{s\in P} \max_{q\in s} d(p,q)$}
		\EndFor
		\\ \Return{$\arg \min_{p\in P} d_p$.}
	\end{algorithmic}
\end{algorithm}
Computing the distance from a segment to $n-1$ other segment takes $O(n)$ time.
\Cref{alg:1center} for each input segment, computes its distance to all other segments, so, the total running time of this algorithm is $O(n^2)$.

\medskip
Now, we compute a set of candidates for approximations of the radius of the $k$-center of a set of segments. These values are used in the latter sections in a parametric pruning algorithm to find the corresponding set of centers.
Note that $\epsilon$ needs to be smaller than $d(s_i,s_j)/2$, for any $s_i,s_j \in S$.
Computing the smallest pairwise distance takes $O(n \log n)$ time by constructing a Voronoi diagram for segments (Line 1).

\begin{algorithm}[h]
	\caption{Finding $r$}
	\label{alg:findr}
	\begin{algorithmic}[1]
		\Require{A set of segments $S$, an integer $k$, a constant $\epsilon$}
		\Ensure{A set of radius $R$}
		\State{Discretize the segments of $S$ with chunks of radius $\epsilon_0$}
		\State{$R$= the pairwise distances between any pair of vertices from the previous step.}
		\\ \Return{$R$}
	\end{algorithmic}
\end{algorithm}

\begin{theorem}\label{theorem:radius}
	\Cref{alg:findr} finds a set that solving the $k$-center problem on that gives a $(1+\epsilon)$-approximation for the radius of the $k$-center of a set of segments.
\end{theorem}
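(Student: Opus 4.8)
The plan is to argue that the finite candidate set $R$ produced by \Cref{alg:findr} already contains a radius that is simultaneously feasible for the (bicriteria) $k$-center and within a $(1+\epsilon)$ factor of the optimum, so that any parametric search over $R$ returns a radius bounded by such a value. So fix an optimal solution with center segments $C^{*}\subseteq S$, $|C^{*}|=k$ and optimal radius $r^{*}$, and pick a point $p$ on some segment $\bar s\in S$ together with a point $c$ on some center segment $s_b\in C^{*}$ realizing $d(p,c)=r^{*}=\max_{s\in S}\max_{q\in s}\min_{x\in\bigcup C^{*}}d(q,x)$.

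The key step is to round $p$ and $c$ to discretization vertices in the ``outward'' direction. Along any segment, the distance to a fixed point is a convex function of the arc-length parameter — it is the distance from a point to a line in disguise — so from $c$ there is a direction along $s_b$ along which $d(\cdot,p)$ does not decrease, and, once $c'$ is fixed, from $p$ there is a direction along $\bar s$ along which $d(\cdot,c')$ does not decrease. I would round $c$ to the first discretization vertex $c'$ of $s_b$ met from $c$ in such a direction, and then round $p$ to the first discretization vertex $p'$ of $\bar s$ met from $p$ in such a direction. Since consecutive discretization vertices are at distance at most $2\epsilon_0$ (endpoints of segments being vertices), each rounding moves the point by at most $2\epsilon_0$, whence
\[
r^{*}=d(p,c)\ \le\ d(p,c')\ \le\ d(p',c')\ \le\ d(p,c')+2\epsilon_0\ \le\ d(p,c)+4\epsilon_0\ =\ r^{*}+4\epsilon_0 .
\]
As $p'$ and $c'$ are discretization vertices, $\hat r:=d(p',c')\in R$.

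It remains to calibrate $\epsilon_0$. \Cref{alg:findr} sets $\epsilon_0=\epsilon\,\underline r/4$, where $\underline r\le r^{*}$ is a lower bound on the optimum computable from the input: when $n>k$ one may take $\underline r$ to be the minimum distance between two input segments (if $s_a\notin C^{*}$ is covered at radius $r^{*}$ by some $s_b\in C^{*}$, that pair realizes a distance $\le r^{*}$), and the case $n\le k$ is trivial. With this choice $\hat r\le r^{*}+\epsilon\,\underline r\le(1+\epsilon)r^{*}$, so $R$ contains a value in $[r^{*},(1+\epsilon)r^{*}]$; under the fixed-aspect-ratio assumption each segment supplies $O(1/\epsilon)$ vertices, so $R$ stays polynomial in size.

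Finally, feeding $r=\hat r\ge r^{*}$ into \Cref{alg:reduction} produces a multi-interval set cover instance feasible with $k$ sets — the segments of $C^{*}$ cover all segment points already at radius $r^{*}\le\hat r$ — so by \Cref{theorem:aprx1} running an $\alpha$-approximate multi-interval set cover (e.g.\ the $O(\log n)$-factor \Cref{alg:intervalcover}) on it outputs at most $\alpha k$ centers with radius $\hat r$; hence the parametric search over $R$ returns some radius $\le\hat r\le(1+\epsilon)r^{*}$ with $O(\alpha k)$ centers. I expect the only real obstacle to be the rounding step: rounding $p$ and $c$ to their nearest vertices could yield a candidate radius strictly below $r^{*}$, which need not be feasible even for $O(k)$ centers, so the convexity argument that forces us to round outward is precisely what couples the ``$\le(1+\epsilon)r^{*}$'' bound with feasibility; everything else is bookkeeping about $\epsilon_0$ and the size of $R$.
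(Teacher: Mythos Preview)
Your argument is correct and in fact more rigorous than the paper's own proof. The paper gives a three-line sketch resting on the assertion that ``the optimal cost is the distance between an endpoint of a segment with a point on another segment,'' then observes that discretization perturbs such a distance by at most $2\epsilon$ and finishes by ``scaling $\epsilon$ by a constant factor.'' That endpoint assertion is not justified and can fail: with $s=[(0,0),(10,0)]$, $c_1=[(2,-5),(2,5)]$, $c_2=[(8,-5),(8,5)]$, the optimal $2$-center of $\{s,c_1,c_2\}$ is $\{c_1,c_2\}$ with radius $3$, attained at the interior point $(5,0)\in s$ with nearest center points $(2,0)$ and $(8,0)$, both interior to their segments. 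The paper also does not address the feasibility side you isolate --- that one needs some $\hat r\in R$ with $\hat r\ge r^{*}$ so that the multi-interval set-cover instance at radius $\hat r$ admits a size-$k$ solution. Your outward-rounding via the convexity of $t\mapsto\lVert p(t)-c\rVert$ (minor quibble: this is the norm of an affine map, not ``distance from a point to a line,'' though the convexity conclusion is unaffected) secures $r^{*}\le\hat r\le r^{*}+O(\epsilon_0)$ in one stroke, and your calibration $\epsilon_0=\Theta(\epsilon\cdot\min_{i\ne j}d(s_i,s_j))$ makes precise what the paper's pre-algorithm remark about choosing $\epsilon$ below half the minimum pairwise segment distance only hints at.
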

\begin{proof}
	The distance between an optimal center and a member of its cluster is modified by at most $2\epsilon$ by discretization, assuming that $\epsilon$ is smaller than the minimum of $R$. The optimal cost is the distance between an endpoint of a segment with a point on another segment. Therefore, after discretization, this distance is at most multiplied by $1+2\epsilon$. Scaling $\epsilon$ by a constant factor concludes the proof.
\end{proof}

\begin{lemma}\label{lemma:findr}
	\Cref{alg:findr} takes $O(\frac{1}{\epsilon^4})$ time.
\end{lemma}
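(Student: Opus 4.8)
The plan is to bound the size of the output set $R$ and the time needed to produce it \emph{in terms of $\epsilon$ alone}; this is precisely where the fixed-aspect-ratio hypothesis (the $^\ddagger$ attached to this algorithm in Table~1) is used. First I would normalize: by translating and scaling, assume the smallest pairwise distance between the segments of $S$ equals $1$. The note preceding \Cref{alg:findr} requires the discretization radius to be below half this minimum distance, so we may fix $\epsilon_0=\Theta(\epsilon)$; and the assumption that the aspect ratio (the ratio of the largest to the smallest pairwise distance, i.e.\ the spread) is a fixed constant means that, after this scaling, the whole input lies in an axis-aligned box $B$ of area $\Theta(1)$, and in particular each segment has length $O(1)$.

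Next I would analyze Line~1. Discretizing a segment of length $\ell$ with chunks of radius $\epsilon_0$ produces $O(\ell/\epsilon_0)$ chunk endpoints, all lying in $B$. Overlay on $B$ a grid of cell side $\epsilon_0$ and keep a single representative vertex per occupied cell. This does not degrade the guarantee of \Cref{theorem:radius}: two points in the same cell are within $O(\epsilon_0)=O(\epsilon)$ of each other, and that theorem already absorbs a multiplicative $(1+\Theta(\epsilon))$ error into the radius. After this merging the number of distinct vertices is at most the number of grid cells meeting $B$, namely $O(\operatorname{area}(B)/\epsilon_0^{\,2})=O(1/\epsilon^2)$. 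Then Line~2 lists the pairwise distances among these $V=O(1/\epsilon^2)$ vertices, so $|R|=\binom{V}{2}=O(1/\epsilon^4)$ and the listing costs $O(V^2)=O(1/\epsilon^4)$ time. The remaining work — computing the minimum pairwise distance to set $\epsilon_0$ (the $O(n\log n)$ Voronoi-diagram step mentioned before the algorithm) and sweeping each segment across the grid in Line~1, $O\!\bigl(n+\sum_i \ell_i/\epsilon_0\bigr)=O(n/\epsilon)$ — is of lower order once $n$ is treated as fixed relative to $\epsilon$; this is consistent with the $O(n^3/\epsilon^4)$ bound claimed for \Cref{alg:new}, in which the three factors of $n$ come from the per-radius set-cover test of \Cref{lemma:cover} and not from \Cref{alg:findr}. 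Summing gives the claimed $O(1/\epsilon^4)$ running time.

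The main obstacle is justifying that the discretization step contributes only a $1/\epsilon$-type blow-up that is \emph{independent of $n$}. Naively each of the $n$ segments yields $\Theta(1/\epsilon)$ chunk endpoints, which would give $\Theta(n/\epsilon)$ vertices and $\Theta(n^2/\epsilon^2)$ pairwise distances; to reach $O(1/\epsilon^4)$ one must collapse vertices at scale $\epsilon_0$ so that their count is governed by the $O(1)$-area bounding box rather than by $n$, and one must check that this collapsing is harmless by appeal to the $(1+\epsilon)$-slack already present in \Cref{theorem:radius}. This is exactly the point at which the fixed-aspect-ratio hypothesis enters, and it should be stated explicitly alongside the lemma.
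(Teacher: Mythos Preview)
Your argument reaches the same conclusion by the same two-step count as the paper: bound the number of discretized vertices by $O(1/\epsilon^{2})$, then take $\binom{O(1/\epsilon^{2})}{2}=O(1/\epsilon^{4})$ pairwise distances at $O(1)$ each. The paper's proof is exactly that terse---it simply asserts the $O(1/\epsilon^{2})$ vertex count without argument and moves on.

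Where you differ is in actually \emph{justifying} that vertex bound. You correctly observe that naive $1$D discretization along $n$ segments yields $\Theta(n/\epsilon)$ points, not $O(1/\epsilon^{2})$, and you repair this by (i) invoking the fixed-aspect-ratio hypothesis to place everything in a box of area $\Theta(1)$ after normalization, and (ii) introducing a grid-snapping step that collapses vertices at scale $\epsilon_0$ so the count is governed by area rather than by $n$. Step~(ii) is not present in \Cref{alg:findr} as written; you are effectively amending the algorithm to make the stated bound provable, and you check that the amendment is harmless via the $(1+\epsilon)$ slack in \Cref{theorem:radius}. This is more careful than the paper, which leaves both the role of the aspect ratio and the mechanism for achieving $O(1/\epsilon^{2})$ implicit. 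Your closing remark about the residual $O(n\log n)$ and $O(n/\epsilon)$ terms is also a point the paper does not address; the lemma as stated suppresses $n$, and you are right that this is only honest if $n$ is treated as fixed or if those terms are absorbed into the $O(n^{3}/\epsilon^{4})$ bound of \Cref{theorem:time}.
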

\begin{proof}
	The number of points after discretization is $O(\frac{1}{\epsilon^2})$. Taking pairs of these distances can be done in $\binom{O(\frac{1}{\epsilon^2})}{2}=O(\frac{1}{\epsilon^4})$ ways. Computing the distance between a pair takes $O(1)$ time.
	So, the time complexity of the algorithm is $O(\frac{1}{\epsilon^4})$.
\end{proof}

\subsubsection*{Approximate $k$-center}
We discretize the segments to find the candidates for $r$ to use the set cover algorithm.

\begin{algorithm}[h]
	\caption{Approximate $k$-Center of Segments}
	\label{alg:new}
	\begin{algorithmic}[1]
		\Require{A set of segments $S$, an integer $k$, a constant $\epsilon$}
		\Ensure{A set of $k$ segments as centers $C$}
		
		\State{$R=$ Run \Cref{alg:findr}.}
		\For{$r\in R$}
		\State{$C=$ Run \Cref{alg:reduction} using the set cover algorithm of \cite{vazirani2013approximation}.}
		\EndFor
		\\ \Return{$C$}
	\end{algorithmic}
\end{algorithm}

\begin{theorem}\label{theorem:new}
	\Cref{alg:new} gives a solution with $O(k\log n)$ centers and radius $r(1+\epsilon)$ for $k$-center of segments.
\end{theorem}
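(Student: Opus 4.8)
The statement packages together three facts that have all been established earlier: the reduction of \Cref{alg:reduction} (Theorem~\ref{theorem:aprx1}) turns an $\alpha$-approximation for multi-interval set cover into a solution with $\alpha k$ centers at radius $r$; the set-cover–based algorithm \Cref{alg:intervalcover} solves multi-interval set cover with the same factor as ordinary set cover (Theorem~\ref{theorem:intervalsetcover}); and the greedy set-cover algorithm of~\cite{vazirani2013approximation} achieves an $O(\log n)$ factor. So the proof is essentially a matter of composing these, then arguing that scanning over the candidate radii produced by \Cref{alg:findr} (Theorem~\ref{theorem:radius}) gives the $(1+\epsilon)$ loss in the radius. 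I would open by fixing notation: let $r^\star$ be the optimal $k$-center radius for $S$, and let $N$ be the set of candidate radii returned by \Cref{alg:findr}.

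First I would invoke \Cref{theorem:radius}: there is a value $\tilde r \in R$ with $\tilde r \le (1+\epsilon)\, r^\star$ such that the $k$-center instance ``restricted to the discretized point set'' is still feasible with $k$ centers at radius $\tilde r$ — equivalently, there exist $k$ segments of $S$ whose radius-$\tilde r$ neighborhoods cover all segments of $S$ (up to the discretization already accounted for in that theorem). Next, for this $\tilde r$, the inner loop of \Cref{alg:new} calls \Cref{alg:reduction}, which by construction (the nested loop building the $Q_j$'s) produces exactly an instance of multi-interval set cover whose optimum equals the minimum number of radius-$\tilde r$ center segments needed — and that optimum is at most $k$, by the previous sentence. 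Running the $O(\log n)$-approximate set cover through \Cref{alg:intervalcover} inside \Cref{alg:reduction} then yields, by Theorem~\ref{theorem:intervalsetcover} and Theorem~\ref{theorem:aprx1}, a feasible selection of at most $O(\log n)\cdot k = O(k\log n)$ center segments, all of whose realizations have every point of $S$ within distance $\tilde r \le (1+\epsilon) r^\star$.

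Finally I would note that \Cref{alg:new} iterates over all $r \in R$ and that, for the particular value $\tilde r$, the above gives a valid output; returning the solution found for the smallest $r$ that yields a feasible cover of size $O(k\log n)$ gives radius at most $\tilde r \le (1+\epsilon) r^\star = r(1+\epsilon)$ with $O(k\log n)$ centers, which is the claim. (A small care point: the algorithm as written overwrites $C$ each iteration; in the proof I would read it as keeping the best feasible $C$, or simply run the reduction at the single value $\tilde r$, which is all correctness requires.)

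The main obstacle is not any single step but making the chain of reductions line up cleanly on the radius parameter: one has to check that the discretization loss in \Cref{theorem:radius} is the \emph{only} place a $(1+\epsilon)$ factor enters, that \Cref{alg:reduction} faithfully encodes ``radius-$r$ coverage by center segments'' as multi-interval set cover for each fixed $r\in R$ (so that OPT of the set-cover instance is $\le k$ whenever radius $r$ is $k$-feasible), and that the $\alpha k$ bound of Theorem~\ref{theorem:aprx1} composes with $\alpha = O(\log n)$ without hidden dependence on $r$. Once those alignments are verified, the $O(k\log n)$ center bound and the $r(1+\epsilon)$ radius bound both drop out immediately.
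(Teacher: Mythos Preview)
Your proposal is correct and follows essentially the same approach as the paper: compose \Cref{theorem:intervalsetcover} (the $O(\log n)$ set-cover factor transfers to multi-interval set cover), \Cref{theorem:aprx1} (an $\alpha$-approximation yields $\alpha k$ centers at radius $r$), and \Cref{theorem:radius} (the candidate set $R$ contains a $(1+\epsilon)$-approximation of the optimal radius). The paper's proof is a terse three-sentence invocation of exactly these three results; your version spells out the composition more carefully and flags the bookkeeping issue that \Cref{alg:new} overwrites $C$ rather than keeping the best feasible solution, which is a fair observation but does not change the argument.
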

\begin{proof}
	Using \Cref{theorem:intervalsetcover}, the $O(\log n)$-approximation for set cover can be used to solve the multi-interval set cover with the same approximation factor.
	This means that \Cref{alg:new} finds at most $O(k\log n)$ centers, according to\Cref{theorem:aprx1}.
	The set of radii from \Cref{alg:findr} contain a $(1+\epsilon)$-approximation for the optimal radius as proved in \Cref{theorem:radius}.
\end{proof}

\begin{theorem}\label{theorem:time}
	\Cref{alg:new} takes $O(\frac{n^3}{\epsilon^4})$ time.
\end{theorem}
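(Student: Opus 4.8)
The plan is simply to add up the running times of the individual lines of \Cref{alg:new}, invoking the lemmas already proved for its subroutines. First I would handle Line~1: by \Cref{lemma:findr}, running \Cref{alg:findr} costs $O(\frac{1}{\epsilon^4})$ time, and — what matters for the loop that follows — the set $R$ it returns has cardinality $O(\frac{1}{\epsilon^4})$, since (from the count in the proof of \Cref{lemma:findr}) the discretization produces $O(\frac{1}{\epsilon^2})$ points and $R$ collects their $\binom{O(\frac{1}{\epsilon^2})}{2}=O(\frac{1}{\epsilon^4})$ pairwise distances.

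Next I would bound a single iteration of the \textbf{for} loop. Each iteration calls \Cref{alg:reduction} on the $n$ input segments, with the fixed value $r\in R$ and the $O(\log n)$-approximation set cover of \cite{vazirani2013approximation}. By \Cref{lemma:reduction} this costs $O(n^2+T(n))$, where $T(n)$ is the time of the multi-interval set cover step; and by \Cref{lemma:cover}, using that $O(\log n)$-approximation, $T(n)=O(n^3)$ — the universe has $O(n^2)$ elementary intervals, and the greedy cover runs at most $n$ rounds, each removing $O(n^2)$ covered intervals. Hence one iteration costs $O(n^2+n^3)=O(n^3)$.

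Finally I would multiply the two counts: the loop runs $|R|=O(\frac{1}{\epsilon^4})$ times, each at cost $O(n^3)$, for a total of $O(\frac{n^3}{\epsilon^4})$, which dominates the $O(\frac{1}{\epsilon^4})$ spent in Line~1 (and the $O(1)$ bookkeeping in Line~5). This yields the claimed $O(\frac{n^3}{\epsilon^4})$ bound.

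I do not expect a real obstacle; the only point that needs a second glance is confirming that the per-iteration cost of \Cref{alg:reduction} stays $O(n^3)$ and picks up no $\epsilon$-dependence. It does not, because \Cref{alg:reduction} is always fed the original, undiscretized segment set $S$ (so its multi-interval set cover instance has the $O(n^2)$-size universe of \Cref{lemma:cover}), while the discretization of \Cref{alg:findr} is used only to manufacture the candidate radii in $R$.
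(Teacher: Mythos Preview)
Your proposal is correct and follows essentially the same approach as the paper's own proof: invoke \Cref{lemma:findr} for the $O(\frac{1}{\epsilon^4})$ cost and the bound $|R|=O(\frac{1}{\epsilon^4})$, invoke \Cref{lemma:cover} (via \Cref{lemma:reduction}) for the $O(n^3)$ per-iteration cost, and multiply. Your write-up is in fact more detailed than the paper's, which compresses exactly these three steps into two sentences.
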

\begin{proof}
	The time complexity of the algorithm is the time complexity of \Cref{alg:findr} which is $O(\frac{1}{\epsilon^4})$ based on \Cref{lemma:findr}, plus $|R|=O(\frac{1}{\epsilon^4})$ times the complexity of \Cref{alg:reduction} using the geometric set cover, which is $O(n^3)$. So, the algorithm takes $O(\frac{n^3}{\epsilon^4})$ time.
\end{proof}

\subsection{Minimum cost $k$-center }

In the case where $k=1$, the objective is to find a segment $s_i$ such that the maximum distance of any point on any other segment to $s_i$ is minimized. This problem is equivalent to compute a radius $r$ for which the Minkowski sum of a segment with the disk of radius $r$ intersects the other segments.
		
to do: add the Minkowski sum step to algorithm 6

\begin{algorithm}[h]
	\caption{Minimum $1$-Center of Segments}
	\label{alg:m1center}
	\begin{algorithmic}[1]
		\Require{A set of segments $P$}
		\Ensure{A center from $P$}
		\For{$p\in P$}
		\State{$d_p=\max_{s\in P} \min_{s\in q} d(p,q)$}
		\EndFor
		\\ \Return{$\arg \min_{p\in P} d_p$.}
	\end{algorithmic}
\end{algorithm}

\begin{theorem}
	\Cref{alg:m1center} finds a minimum $1$-center in $O(n^2)$ time.
\end{theorem}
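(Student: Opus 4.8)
The plan is to establish exact correctness first and then bound the running time, exploiting that for $k=1$ the center is forced to be one of the $n$ input segments. Since the MinMax segment clustering requires $C\subset S$, for $k=1$ an optimal center is simply the input segment of smallest cost, so it suffices to evaluate the exact cost of each segment used as the single center and return the cheapest. For a fixed candidate center $s_i$, the minimum‑cost objective demands that at least one point of every other segment lie within distance $r$ of some point of $s_i$; hence the cost of choosing $s_i$ is exactly
\[
c(s_i)=\max_{s\in S}\ \min_{q\in s}\ \min_{c\in s_i} d(c,q)=\max_{s\in S} d_{\min}(s_i,s),
\]
where $d_{\min}(\cdot,\cdot)$ is the ordinary minimum Euclidean distance between two segments. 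Equivalently, $c(s_i)$ is the smallest radius $r$ for which the Minkowski sum $s_i\oplus D_r$ (with $D_r$ the disk of radius $r$) meets every other segment — the reformulation stated just before the algorithm, which is the content of the pending ``Minkowski sum'' step. This is precisely the value $d_p$ computed in \Cref{alg:m1center} for $p=s_i$: the inner minimization picks the closest point $q$ of each segment $s$ to $s_i$, and the maximization over $s$ takes the farthest such segment.

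Correctness of \Cref{alg:m1center} then follows immediately: the algorithm returns $\arg\min_{s_i\in S} d_{s_i}=\arg\min_{s_i\in S} c(s_i)$, which by the previous paragraph is an optimal minimum $1$-center, realized with radius $r=\min_{s_i\in S} c(s_i)$.

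For the time bound, the key observation is that the minimum distance between two line segments is computable in $O(1)$ time: after a constant‑time intersection/containment test (giving distance $0$ when they meet), the closest pair of points is among the constantly many endpoint‑to‑endpoint and endpoint‑to‑segment pairs. The outer loop of \Cref{alg:m1center} runs over all $n$ segments, and for each it evaluates the distance to the other $n-1$ segments and takes the maximum, costing $O(n)$; thus all values $d_p$ are obtained in $O(n^2)$ total time, and the concluding $\arg\min$ over $n$ numbers costs $O(n)$. Hence the algorithm runs in $O(n^2)$ time.

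There is no genuine obstacle here; the only point worth stressing — and the reason the naive enumeration already yields the exact optimum — is that, in contrast with the maximum‑cost version whose segment‑to‑segment ``distance'' is asymmetric and violates the triangle inequality (\Cref{fig:nonsym,fig:triangle_inequality}), the minimum‑cost objective collapses to the symmetric minimum distance between segments, so no Voronoi diagram, Minkowski‑sum arrangement, or set‑cover reduction is needed for $k=1$.
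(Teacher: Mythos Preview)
Your proposal is correct and follows essentially the same approach as the paper: both rest on the Minkowski-sum reformulation (the cost of a candidate center $s_i$ is the smallest $r$ for which $s_i\oplus D_r$ meets every other segment), and both obtain the $O(n^2)$ bound by computing, for each of the $n$ candidates, its distance to the remaining $n-1$ segments in $O(n)$ time. Your write-up is in fact more complete than the paper's own proof, which is a two-line time-complexity argument and leaves correctness implicit; your explicit observation that the minimum-cost objective collapses to the symmetric segment-to-segment minimum distance (hence each pairwise evaluation is $O(1)$) is a useful clarification the paper omits.
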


\begin{proof}
	The time required for computing the radius $r$ for which the Minkowski sum of a segment $p$ with the disk of radius $r$ intersects all segments is $O(n)$. So, the overall time complexity of the algorithm is $O(n^2)$.
\end{proof}

We modify the algorithms for $k$-center of segments (\Cref{alg:new}) to solve the minimum $k$-center of segments problem.

\medskip
As before, in \Cref{alg:reduction2}, $\oplus$ denotes the Minkowski sum.
\begin{algorithm}[h]
	\caption{Reduction from Minimum $k$-Center of Segments to Multi-Interval Set Cover}
	\label{alg:reduction2}
	\begin{algorithmic}[1]
		\Require{A set of segments $S$, an integer $k$, a constant $r$}
		\Ensure{$k$ segments as centers}
		\For{$s_i \in S$}
		\For{$s_j \in S$}
		\If{$s_i\oplus $ disk of radius $r$ intersects with $s_j$}
		\State{$Q_j\gets Q_j \cup \{s_j\}$.}
		\EndIf
		\EndFor
		\EndFor
		\State{$C$= multi-interval set cover with $\{Q_i\}_{i=1}^{n}$ as sets and $S$ as the universal set.}
		\State{return $C$}
	\end{algorithmic}
\end{algorithm}

\begin{algorithm}[h]
	\caption{Approximate Minimum $k$-Center of Segments}
	\label{alg:new2}
	\begin{algorithmic}[1]
		\Require{A set of segments $S$, an integer $k$, a constant $\epsilon$}
		\Ensure{A set of $k$ segments as centers $C$}
		\State{$R=$ Run \Cref{alg:findr}.}
		\For{$r\in R$}
		\State{$C=$ Run \Cref{alg:reduction2} using the set cover algorithm of \cite{vazirani2013approximation}.}
		\EndFor
		\\ \Return{$C$}
	\end{algorithmic}
\end{algorithm}

\begin{theorem}
	\Cref{alg:new2} gives a solution with $O(k\log n)$ centers and radius $r(1+\epsilon)$ for $k$-center of segments.
\end{theorem}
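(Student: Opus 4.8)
The plan is to mirror the proof of \Cref{theorem:new} almost verbatim, substituting \Cref{alg:reduction2} for \Cref{alg:reduction} and tracking the single structural change: in the minimum-cost version ``covering'' is replaced by ``hitting''. First I would establish the analog of \Cref{theorem:aprx1} for the minimum-cost version. For a fixed radius $r$, the set $Q_j$ built by \Cref{alg:reduction2} contains $s_j$ exactly when the Minkowski sum of $s_i$ with the disk of radius $r$ meets $s_j$, i.e.\ exactly when selecting $s_i$ as a center brings at least one point of $s_j$ within distance $r$. Hence a subfamily of $\{Q_i\}$ whose union is $S$ is in bijection with a set of centers that hits every segment at cost $r$, and the two cardinalities agree; so an $\alpha$-approximation for multi-interval set cover yields $\alpha k$ centers at radius $r$.

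Next I would invoke \Cref{theorem:intervalsetcover} together with the $O(\log n)$-approximation for set cover~\cite{vazirani2013approximation} to conclude that, for each candidate $r$ tried in the loop of \Cref{alg:new2}, the returned center set has size $O(k\log n)$.

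Then I would reprove the radius bound, i.e.\ the analog of \Cref{theorem:radius}. In the minimum-cost version the optimal radius $r_{opt}$ equals the largest, over non-center segments $s_j$, of the distance from the chosen center segment to the nearest point of $s_j$. Since the distance between two segments is always attained with at least one of the two witnessing points being an endpoint, $r_{opt}$ is a vertex-to-segment distance, hence is captured (up to the additive $2\epsilon_0$ error introduced by discretizing each segment into chunks of radius $\epsilon_0$, which becomes a multiplicative factor $1+2\epsilon_0$ once $\epsilon_0$ is below half the minimum pairwise distance) by some element of the set $R$ output by \Cref{alg:findr}. Rescaling $\epsilon$ by a constant finishes this step. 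Combining the two ingredients, running \Cref{alg:reduction2} on the good radius in $R$ gives $O(k\log n)$ centers with radius at most $(1+\epsilon)r_{opt}$, which is the claimed guarantee for the minimum $k$-center of segments.

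The only place that needs genuine (if brief) care — the ``hard part'' relative to the maximum-cost case — is verifying that $r_{opt}$ remains a vertex-to-segment distance after we pass from ``cover the whole segment'' to ``cover one point of the segment''; this is precisely what guarantees the discretization in \Cref{alg:findr} still produces a candidate within a $(1+\epsilon)$ factor. It follows from the elementary fact about segment–segment distances, but I would state it explicitly because the minimum-cost objective nests an extra $\min_{p\in s}$ that is absent in \Cref{theorem:new}.
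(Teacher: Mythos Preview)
Your proposal is correct and follows exactly the approach the paper takes: its entire proof is the single sentence ``The proof is similar to \Cref{theorem:new}, the only difference is that it is enough if one point of a segment is within distance $r$ of a segment, which was modeled in \Cref{alg:reduction2}.'' You simply unpack that sentence, and your explicit verification that the minimum segment--segment distance is attained at an endpoint (so that \Cref{alg:findr} still yields a $(1+\epsilon)$-approximate radius) is a detail the paper leaves implicit.
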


\begin{proof}
	The proof is similar to \Cref{theorem:new}, the only difference is that it is enough if one point of a segment is within distance $r$ of a segment, which was modeled in \Cref{alg:reduction2}.
\end{proof}

\smallskip\noindent
\textbf{Remark.} We note that since a segment can be considered as a polygon with zero area, the presented algorithms in the next section can be applied to solve the MinMax segment clustering problem in the AGU model. The difference is that the centers are segments in the MinMax segment clustering, while they are points in the domain-restricted $k$-center of polygons.

\section{Domain-restricted $k$-center of polygons}

\subsection{Maximum $k$-center}
For the maximum $k$-center of polygons for $k=1$, the center of SED of the vertices might not fall inside any of the polygons. We give a $2$-approximation algorithm for this problem (\Cref{alg:1cluster}).\vspace*{1mm}
\begin{algorithm}[h]
	\caption{$1$-center}
	\label{alg:1cluster}
	\begin{algorithmic}[1]
		\Require{A set of polygons $P$}
		\Ensure{A center from $P$}
		\State{$c=$ the center of the SED of points of $P$ and the vertices of the polygons in $P$}
		\\ \Return{the nearest neighbor of $c$ in $P$}
	\end{algorithmic}\vspace*{-2mm}
\end{algorithm}

\begin{theorem} \label{thm:clustercenter}
	The approximation factor of \Cref{alg:1cluster} is $2$.
\end{theorem}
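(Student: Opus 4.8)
The plan is to bound the covering radius achieved by \Cref{alg:1cluster} by twice the optimal domain-restricted radius, which I denote $r^\ast$, using two standard structural facts: a simple polygon (convex or not) is contained in the convex hull of its vertices, and the center of a smallest enclosing disk lies in the convex hull of the point set it encloses.

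First I would reduce the objective to vertices. Let $V$ be the set of all vertices of all polygons in $P$. Each polygon $Q_i$, as a closed region, lies in $\mathrm{conv}(V)$, and for any fixed candidate center $q$ the function $x\mapsto d(q,x)$ is convex, so its maximum over the polytope $\mathrm{conv}(V)$ is attained at a point of $V$. Hence the covering radius of any center $q$, namely $\max_{Q_i\in Q}\max_{p\in Q_i} d(q,p)$, equals $\max_{v\in V} d(q,v)$; in particular the smallest enclosing disk of the polygons equals the smallest enclosing disk of $V$, so Line~1 computes the center $c$ and radius $\rho$ of the SED of $V$. Next, letting $c^\ast\in\bigcup_i Q_i$ be an optimal domain-restricted center with radius $r^\ast$, all of $V$ lies in the disk $D(c^\ast,r^\ast)$, and since the SED is smallest, $\rho\le r^\ast$.

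The heart of the argument is bounding the ``snapping'' distance $d(c,c')$, where $c'$ is the nearest point of $\bigcup_i Q_i$ to $c$ returned by the algorithm. Here I would invoke the fact that the center $c$ of the SED of $V$ lies in $\mathrm{conv}(V)$; since $V\subseteq D(c^\ast,r^\ast)$ and a disk is convex, $\mathrm{conv}(V)\subseteq D(c^\ast,r^\ast)$, so $d(c,c^\ast)\le r^\ast$. Because $c^\ast\in\bigcup_i Q_i$ and $c'$ minimizes the distance from $c$ over $\bigcup_i Q_i$, we get $d(c,c')\le d(c,c^\ast)\le r^\ast$. Finally, the triangle inequality gives, for every $v\in V$,
\[
d(c',v)\le d(c',c)+d(c,v)\le r^\ast+\rho\le 2r^\ast,
\]
and by the first step this is exactly the covering radius of the center $c'$, so \Cref{alg:1cluster} is a $2$-approximation.

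The only non-routine ingredients are the two structural facts, and I expect the convex-hull membership of the SED center to be the point a careful reader will want made explicit: it is precisely what lets us charge the snapping distance $d(c,c')$ to $r^\ast$ (rather than, say, to $\rho$ plus the diameter of $V$), which is what keeps the factor at $2$. The reduction to vertices via convexity of disks is what makes the argument work even for non-convex input polygons; everything else is the triangle inequality.
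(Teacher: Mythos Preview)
Your proof is correct and follows the same skeleton as the paper's: bound the SED radius $\rho$ by $r^\ast$, bound the snapping distance $d(c,c')$, and finish with the triangle inequality. Your reduction to vertices via convexity of $x\mapsto d(q,x)$ is a welcome explicit step that the paper leaves implicit.

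However, the convex-hull membership of the SED center, which you flag as the crux, is not actually needed, and the paper does not use it. The snapping distance can be bounded much more directly: $c'$ lies in $\bigcup_i Q_i$, and every point of $\bigcup_i Q_i$ lies in the SED $D(c,\rho)$ by definition, so $d(c,c')\le\rho\le r^\ast$ immediately. Equivalently, since $c^\ast\in\bigcup_i Q_i\subseteq D(c,\rho)$, one has $d(c,c^\ast)\le\rho$ without ever locating $c$ inside $\mathrm{conv}(V)$. This is what the paper (tersely) does: it bounds both $d(c,p)$ and $d(c,q)$ by the SED radius $r$ and adds. Your detour through $c\in\mathrm{conv}(V)\subseteq D(c^\ast,r^\ast)$ gives the same inequality $d(c,c')\le r^\ast$, but it is strictly more work for no gain; the factor~$2$ is not at risk without it.
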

\begin{proof}
	By definition, the radius of the smallest enclosing disk $(r)$ is a lower bound for $1$-cluster. If the center of the smallest enclosing disk falls inside the polygon, that is the optimal solution.
	The distance from any point to the center of the smallest enclosing disk is at most its radius. Let $o$ denote the center of the optimal $1$-center and $p$ be the nearest neighbor (point) of $c$ in $P$. Then, using triangle inequality,
	\[
	d(p,o) \leq d(o,c)+d(c,p) \leq 2r.
	\]

\vspace*{-7mm}
\end{proof}

\begin{theorem}
	The time complexity of \Cref{alg:1cluster} is $O(n)$ expected time and $O(n\log n)$ worst-case time.
\end{theorem}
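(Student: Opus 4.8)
The plan is to charge the running time of \Cref{alg:1cluster} to its two lines separately and to show that neither exceeds the stated bound.

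First I would analyse Line~1, the computation of the center $c$. The point set handed to the smallest-enclosing-disk subroutine is the set of explicit points of $P$ together with all vertices of the polygons of $P$, which has size $O(n)$ (reading $n$ as the total input complexity). Computing the smallest enclosing disk of $m$ planar points is a standard LP-type problem of constant combinatorial dimension, so a randomized algorithm --- such as Welzl's randomized incremental algorithm or a randomized linear-programming solver --- finds $c$ in $O(m)$ expected time; with $m=O(n)$ this gives the $O(n)$ expected bound. For a worst-case guarantee I would instead use a deterministic construction, the simplest of which runs in $O(m\log m)$ time and hence costs $O(n\log n)$. (Alternatively one could appeal to Megiddo's linear-time prune-and-search algorithm~\cite{megiddo1983linear} recalled in \Cref{sec:preliminaries}, which would even give $O(n)$ worst-case, but the weaker $O(n\log n)$ bound already suffices for the statement.)

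Next I would analyse Line~2, which returns the point of the union of the polygons of $P$ nearest to $c$. It is enough, for each polygon, to test whether $c$ lies inside it --- in which case the nearest neighbour is $c$ itself at distance $0$ and we may stop --- and otherwise to compute the distance from $c$ to each bounding edge of that polygon and keep the minimum; the overall minimum over all polygons (and over the explicit points of $P$) is then the desired nearest neighbour. This is a single pass over the input, costing $O(n)$ time, so Line~2 is dominated by Line~1 in both the expected and the worst-case settings. Summing the two contributions yields $O(n)$ expected and $O(n\log n)$ worst-case time.

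The only point that needs some care --- and the nearest thing here to an ``obstacle'' --- is the bookkeeping around the parameter: one must verify that both the number of points passed to the enclosing-disk routine and the total complexity of the polygons are $O(n)$, and one must pair the two enclosing-disk subroutines correctly, using the randomized one to certify the expected bound and a deterministic one to certify the worst-case bound. Beyond this there is no deeper difficulty.
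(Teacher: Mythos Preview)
Your proposal is correct and follows essentially the same approach as the paper: analyse the two lines separately, charging $O(n)$ expected / $O(n\log n)$ worst-case to the smallest-enclosing-disk computation and $O(n)$ to the nearest-neighbour scan, then sum. The paper's own proof is just a terser version of exactly this argument (it even makes the same pairing of a randomized linear-time SED routine with a deterministic $O(n\log n)$ one, rather than invoking Megiddo for the worst case).
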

\begin{proof}
	Computing the nearest neighbor of a point to a set of segments takes $O(n)$ time. Also, the SED can be computed in expected linear time and worst-case $O(n\log n)$ time, so the total time complexity is $O(n)$ expected time and $O(n\log n)$ worst-case time.
\end{proof}

\subsubsection{Maximum $k$-center of convex polygons}

	If we solve the problem with a naive grid algorithm, we may miss the covering of the sharp corners of the boundary of the polygons. To treat these cases, we first compute the Minkowski sum of a polygon with a square of side length $\epsilon$, and then consider the input on a grid with cell length $\epsilon$. This guarantees at least one vertex is chosen for the sharp corners. See \Cref{fig:triangle} for an illustration.
	But the rounded polygons on the grid may impose some centers which do not lie on the polygons. To avoid this, we map the vertices of the rounded polygons to the closest point of the polygon to satisfy the domain-restriction assumption (Line 11 and 12 of \Cref{alg:kcluster}). We note that $\epsilon$ needs to be smaller than $\frac{r_i}{k}$, where $r_i$ denotes the radius of the SEC of the vertices of $P_i$, for $i=1,\ldots,n$.
	
\begin{figure}[h]
\vspace*{2mm}
	\centering
	\includegraphics[scale=1]{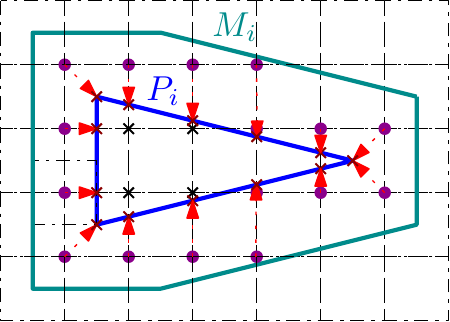}
	\caption{A polygon $P_i$ and the result of the Minkowski sum of $P_i$ and a square of side length $\epsilon$. The cross points denote the contributed points by $P_i$ on the discretization.}
	\label{fig:triangle}\vspace*{-2mm}
\end{figure}

\medskip
	We show that applying any $k$-center algorithm on the union of the grid points lying within the polygons and the mapped points of the grid to the boundary of the polygons (in sharp corners) solves our problem. If we use the presented algorithm of~\cite{vazirani2013approximation}, we come up with a $2+\epsilon$ approximation.

\begin{algorithm}[h]
	\caption{$k$-Center of Convex Polygons}
	\label{alg:kcluster}
	\begin{algorithmic}[1]
		\Require{A set of polygons $P$, an integer $k$, a constant $\epsilon$}
		\Ensure{A set of $k$ centers inside $P$}
		\For{$i=1,\ldots,n$}
		\State{$M_i=$ the Minkowski sum of $P_i$ with a disk of radius $\epsilon$.}
		\EndFor
		\State{build a grid of cell length $\epsilon$}
		\State{$X=$ compute the vertices of the grid inside the shapes in $M$}
		\State{$S=\emptyset$}
		\For{$x\in X$}
		\If{$x \notin P$}
		\For{$M_i\in M, x\in M_i$}
		\State{$Y=\emptyset$}
		\State{$y=$ the nearest neighbor of $x$ in $P_i$.}
		\State{$Y\gets Y \cup \{y\}$ }
		\EndFor
		\EndIf
		\State{$S\gets S\cup Y$.}
		\EndFor
		\State{$C=$ an approximate $k$-center of $S \cup X$}
		\State{return $C $}
	\end{algorithmic}
\end{algorithm}

\begin{theorem}\label{lemma:triangle}
	The approximation factor of \Cref{alg:kcluster} is $2+\epsilon$, for fixed aspect ratio and $\epsilon$ and its time complexity is $O(\sum_{i=1}^n |P_i|+\frac{1}{\epsilon^2})$.
\end{theorem}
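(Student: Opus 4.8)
The plan is to prove the two assertions separately. Write $Q=\{P_1,\dots,P_n\}$, let $P=\bigcup_i P_i$, let $r^\ast$ be the optimal cost of the domain-restricted $k$-center of $Q$ (centres inside the $P_i$'s, every point of every $P_i$ covered within the radius), let $D_\epsilon$ be the disk of radius $\epsilon$, and let $G$ be the point set actually handed to the $k$-center subroutine of \Cref{alg:kcluster}, i.e.\ the grid vertices lying inside $P$ together with the projections $S$ of the remaining in-$M$ grid vertices onto their polygons. The key structural fact I would establish first is a \emph{net property} of $G$: (a) every point of $G$ lies inside some $P_i$ (the interior grid vertices trivially, the points of $S$ by construction), so any subset of $G$ is a feasible set of domain-restricted centres; and (b) for every $p\in P$ there is a point of $G$ at distance at most $\sqrt2\,\epsilon$ from $p$. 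For (b): the grid vertex $x$ closest to $p$ has $d(x,p)\le\tfrac{\sqrt2}{2}\epsilon<\epsilon$, hence $x\in P_i\oplus D_\epsilon=M_i$ and so $x$ is among the computed in-$M$ grid vertices; if $x\in P$ it witnesses $p$, and otherwise the algorithm keeps $y=$ the nearest point of $P_i$ to $x$, where $d(x,y)=d(x,P_i)\le d(x,p)$ since $p\in P_i$, so $d(y,p)\le d(x,y)+d(x,p)\le\sqrt2\,\epsilon$. This is exactly what the Minkowski-sum inflation and the subsequent projection are there for: inflation guarantees a witnessing grid vertex even at a thin spike, and projection pulls it back inside the polygon.

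With the net property in hand, the approximation bound is a chain of triangle inequalities. Let $C^\ast=\{c_1^\ast,\dots,c_k^\ast\}$ be an optimal domain-restricted solution of cost $r^\ast$. By (b), each $c_i^\ast\in P$ has a point $\hat c_i\in G$ with $d(c_i^\ast,\hat c_i)\le\sqrt2\,\epsilon$; every point of $G$ lies in some $P_i$, hence is covered by $C^\ast$ within $r^\ast$, so $\{\hat c_1,\dots,\hat c_k\}\subseteq G$ covers $G$ within $r^\ast+\sqrt2\,\epsilon$. Thus the optimal $k$-center of $G$ has cost at most $r^\ast+\sqrt2\,\epsilon$, and the $2$-approximation subroutine of~\cite{vazirani2013approximation} returns $C\subseteq G$ covering $G$ within $2(r^\ast+\sqrt2\,\epsilon)$. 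Applying (b) once more, every $p\in P$ is within $\sqrt2\,\epsilon$ of some $q\in G$ and $q$ is within $2(r^\ast+\sqrt2\,\epsilon)$ of $C$, so $d(p,C)\le 2r^\ast+3\sqrt2\,\epsilon$. Finally I would turn the additive $O(\epsilon)$ term into a multiplicative one: since $P_i$ is connected, the $\le k$ disks of radius $\rho$ covering it form a connected union of diameter at most $2k\rho$, while $\mathrm{diam}(P_i)\ge r_i$ (the radius of the smallest enclosing circle of $P_i$, e.g.\ by Jung's inequality), so $r^\ast\ge\max_i r_i/(2k)$; combined with the stated requirement that $\epsilon$ be small relative to $\min_i r_i/k$ (up to a fixed constant and the target accuracy), the slack $3\sqrt2\,\epsilon$ is at most $\epsilon\cdot r^\ast$ after rescaling $\epsilon$, which yields the bound $(2+\epsilon)r^\ast$.

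For the running time: each $M_i$ is $P_i$ with rounded corners and is computed in $O(|P_i|)$ time, for $O(\sum_i|P_i|)$ total. Under the fixed-aspect-ratio assumption the bounding box of $\bigcup_i M_i$ holds $O(1/\epsilon^2)$ grid cells, so there are $O(1/\epsilon^2)$ grid vertices; the sets of in-$M$ vertices, in-$P$ vertices, and $S$ are obtained by rasterizing each $M_i$ and each $P_i$ onto the grid — tracing the convex boundaries in $O(|P_i|)$ plus the cells they cross, filling interior rows, and recording for each collar vertex the boundary feature of $P_i$ it projects to — which, charging filled cells to grid cells via global flags and boundary work to the $P_i$'s, costs $O(\sum_i|P_i|+1/\epsilon^2)$ when the regions have bounded overlap (as in the intended map-label setting). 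Running the $2$-approximation $k$-center on the $O(1/\epsilon^2)$ points of $G$ takes $O(1/\epsilon^2)$ time for fixed $k$. Summing gives $O(\sum_i|P_i|+1/\epsilon^2)$.

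The main obstacle is the approximation argument, not the bookkeeping. One must arrange $G$ so that it simultaneously (i) is dense enough that projecting the optimal centres onto it barely increases the covering radius, (ii) lies entirely inside the polygons so the subroutine's chosen centres stay feasible, and (iii) is not too sparse near sharp corners, where in-$P$ grid vertices may be absent — and the Minkowski-sum-plus-projection construction is precisely what secures all three at once, which is the crux. The remaining delicate point is that the error accumulated this way is purely \emph{additive} ($O(\epsilon)$), so it must be absorbed into the multiplicative $(1+\epsilon)$ slack in ``$2+\epsilon$'' via the lower bound $r^\ast=\Omega(\min_i r_i/k)$; this is where the hypotheses ``$\epsilon<r_i/k$'' and ``fixed aspect ratio'' are needed (the latter also being what keeps the grid size $O(1/\epsilon^2)$).
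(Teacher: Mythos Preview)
Your proof is correct and follows essentially the same route as the paper: establish that the Minkowski-inflation plus projection yields a set $G\subseteq P$ that is an $O(\epsilon)$-net for $P$, run the $2$-approximation $k$-center on $G$, extend back to $P$ by triangle inequality, and absorb the additive $O(\epsilon)$ into a multiplicative factor via the lower bound $r^\ast\ge \Omega(\min_i r_i/k)$. Your treatment is in fact tighter and more explicit than the paper's (you track the constant $\sqrt2$, and your extra step of projecting the optimal centres onto $G$ makes the argument robust to whether the $2$-approximation is measured against centres-from-$G$ or arbitrary centres), but the skeleton is identical.
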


\begin{proof}
	Since $\epsilon \leq r_i$, at least one vertex of a grid with cell length $\epsilon$ falls inside $M_i$. Therefore, all the points of $P_i$ are covered by disks of radius $\epsilon$ centered at the vertices of the grid.
	Mapping a grid vertex $(x)$ to its nearest neighbor in $P_i$ $(y)$ still covers the same area of $P_i$, but with radius $2\epsilon$.
	So, a clustering of $Y$ with radius $r$, is a clustering of $X$ with radius $r+\epsilon$, and a clustering of $P_i$ with radius $r+2\epsilon$.
	Let $r^*$ denote the optimal radius of the $k$-center of $\cup_{i=1}^n P_i$.
	Since $Y$ is a subset of the points of polygons, the radius of $k$-center of $Y$ is less than or equal to the $k$-center radius of $\cup_{i=1}^n P_i$, i.e. $r\leq r^*$.
	The approximation factor of the algorithm is therefore:
	\[
	\frac{r+2\epsilon}{r^*} \leq 1+\frac{2\epsilon}{r^*} \leq 1+2\epsilon.
	\]
	
		For the smallest possible $r_i$, $i=1,\ldots,n$ we have
	 $\frac{r_i}{k}\leq 2r^*$
	 which is because each $P_i$ can be covered by multiple centers, and the radius of the clusters are at least $\frac{r_i}{2k}$. Knowing
	 $\epsilon \leq \min_i \frac{r_i}{k}$, we prove the right side of the inequality.
	Since we compute a $2$-approximation of the $k$-center of $Y$, the overall approximation factor is $2(1+2\epsilon)=2+4\epsilon$.

\medskip	
	Computing the circumscribed circle of $P_i$ takes $O(|P_i|)$ time. So, computing the minimum of the radii of these circles takes $O(\sum_{i=1}^n |P_i|)$.
	The time complexity of computing a $k$-center for the set of grid points is $O(k|Y|)=O(k|X|)$. Also, we have that:
$
	|X|=\sum_{i=1}^n \frac{\mathrm{Area}(P_i)}{\epsilon^2},
$
	where $\mathrm{Area}(P_i)$ denotes the area of polygon $P_i$.
	So, the total time complexity of the algorithm is:
	$
	O(\sum_{i=1}^n |P_i|+ \frac{\sum_{i=1}^n \mathrm{Area}(P_i)}{\epsilon^2}).
$
	By scaling the bounding box of the input to have a unit area, i.e. $\sum_{i=1}^n \mathrm{Area}(P_i)=1$, and adjusting the value of $\epsilon$ to be $\epsilon'/4$, where $\epsilon'$ is the input value for $\epsilon$, the bounds in the statement of the theorem are achieved.
	\end{proof}

\subsubsection{Maximum $k$-center of abitrary polygons}

 For a set of arbitrary polygons,
	we first compute a triangulation of $P_i$ for $i=1,\ldots,n$. Let $T$ denote the set of all triangles of $P_i$ for $i=1,\ldots,n$.
	We apply 	\Cref{alg:kcluster} on the set $T$, and compute a discrete set of $\Theta(nk)$ points. We show that applying a $k$-center algorithm with (e.g., with approximation factor 2) on the discrete set gives a $6+\epsilon$ approximation of the optimal solution.
	
	\begin{algorithm}[h]
		\caption{Approximation Algorithm for the $k$-Center of arbitrary polygons}
		\label{alg:kclusterarbitrary}
		\begin{algorithmic}[1]
			\Require{A set of polygons $P$, an integer $k$}
			\Ensure{A set of centers from $P$}
			\State{$T$= the set of triangles in a triangulation of the input polygons}
		\State{$X$= the output of \Cref{alg:kcluster} on the set $T$}
		\State{$C=$ approximate $k$-center of $X$}
		\State{{$C'=$ the closest point of each center in $C$ to a point of a polygon of $P$}}
			\State{return $C$}
		\end{algorithmic}
	\end{algorithm}

\begin{theorem} \label{theorem:arbitrary_polygons}
	The approximation factor of \Cref{alg:kclusterarbitrary} is $6+\epsilon$.
\end{theorem}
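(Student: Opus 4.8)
The plan is to reduce the arbitrary-polygon problem to the convex case already solved by \Cref{alg:kcluster}, exploiting that a triangulation of a simple polygon partitions its interior, and then to follow the covering radius through three stages: the reduction of \Cref{lemma:triangle} applied to the triangles $T$, a $2$-approximate metric $k$-center run on the discrete point set that reduction produces, and the domain-restriction repair of Line~4.

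The first step is a covering equivalence. The triangles of $T$ tile $\bigcup_i P_i$ exactly, so a center set covers every point of $\bigcup_i P_i$ within radius $\rho$ if and only if it covers every point of $\bigcup_{t\in T}t$ within radius $\rho$; hence the optimal domain-restricted $k$-center radius of $T$ equals the optimum $r^{*}$ of the polygon instance, and, since each triangle lies inside its polygon, any point chosen inside a triangle is an admissible center for the original instance. So it suffices to approximate the $k$-center of $T$.

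Next, because the triangles are convex, \Cref{lemma:triangle} applies to $T$ and yields a discrete set $X$ of $\Theta(nk)$ points with the two properties established in that proof: (i) covering $X$ within radius $\rho$ yields a covering of $\bigcup_i P_i$ within radius $\rho+O(\epsilon)$, and (ii) $X$ lies within distance $\epsilon$ of $\bigcup_i P_i$. I would then run the $2$-approximate $k$-center of \cite{vazirani2013approximation} on $X$, obtaining $C\subseteq X$ whose covering radius of $X$ is at most twice the optimal $k$-center radius of $X$ over centers confined to $X$; bounding that optimum in terms of $r^{*}$ loses a further bounded factor, since the true optimal centers lie in $\bigcup_i P_i$ but not necessarily in $X$ and must be replaced by representatives drawn from $X$. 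Line~4 moves each center to its nearest point of $\bigcup_i P_i$, which by (ii) costs at most $\epsilon$ and restores the domain restriction. Pushing the resulting bound back through (i) and rescaling $\epsilon$ by a constant exactly as in the proof of \Cref{lemma:triangle}, a conservative accounting of these losses together with the additive grid terms gives a covering of $\bigcup_i P_i$ of radius $(6+\epsilon)\,r^{*}$.

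The step I expect to be the main obstacle is the middle one: keeping the constant controlled requires being precise about which ``optimum'' one is comparing against at each stage --- the domain-restricted optimum $r^{*}$ of the polygons versus the metric $k$-center optimum of the finite set $X$ with centers confined to $X$, which is what the subroutine actually minimizes --- and it requires checking that $X$, including the Minkowski-sum fringe points that \Cref{alg:kcluster} introduces around sharp corners, stays close enough to $\bigcup_i P_i$ that both that comparison and the Line~4 pullback cost only $O(\epsilon)$ rather than a multiplicative amount. The remaining ingredients --- correctness of the triangulation, convexity of the triangles, and the grid/Minkowski-sum handling of sharp corners --- are inherited directly from \Cref{lemma:triangle}.
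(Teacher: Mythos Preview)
Your proposal follows the same three-stage strategy as the paper: triangulate, discretize the triangles via \Cref{alg:kcluster}/\Cref{lemma:triangle} to obtain a finite witness set $X$, run a $2$-approximate metric $k$-center on $X$, and repair the domain restriction. The paper's bookkeeping is packaged as a single triangle-inequality split,
\[
\max_{T_i\in T}\mathcal{R}(T_i)\;+\;2\,\mathcal{R}(X)\;\le\;(2+\epsilon)\,\mathcal{R}(P)\;+\;2\cdot 2\,\mathcal{R}(P)\;=\;(6+\epsilon)\,\mathcal{R}(P),
\]
where $\mathcal{R}(\cdot)$ denotes the optimal $k$-center radius: the first summand bounds the distance from any point of $P$ to its nearest representative in $X$ via the $(2+\epsilon)$ guarantee of \Cref{lemma:triangle}, and the second combines a factor $2$ for restricting centers to lie in $X$ with the factor $2$ of the approximation subroutine, using $X\subseteq P\Rightarrow\mathcal{R}(X)\le\mathcal{R}(P)$.

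The one place your outline diverges is in treating $X$ as possibly containing Minkowski-fringe points outside $\bigcup_i P_i$ and therefore charging the Line~4 pullback an additive $\epsilon$. The paper instead relies on the nearest-neighbour remapping already performed inside \Cref{alg:kcluster} to assert $X\subseteq P$ outright; with that, the Line~4 cost vanishes and $\mathcal{R}(X)\le\mathcal{R}(P)$ is immediate, which is exactly what lets the constant close at $6+\epsilon$. This is precisely the ``which $X$ / which optimum'' ambiguity you flagged as the main obstacle; once you commit to the remapped $X$, your argument and the paper's coincide.
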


\begin{proof}
Let $\mathcal{R}(Z)$ denote the optimal radius of the $k$-center of
the points inside a set of shapes $Z$. If $Z$ is a set of points, $\mathcal{R}(Z)$ is the optimal radius of the points in $Z$. If $Z$ is a polygon, $\mathcal{R}(Z)$ denote the radius of the $k$-center of the points inside $Z$. If $Z$ is a set of polygons, $\mathcal{R}(Z)$ denote the optimal radius of the $k$-center of all the points inside the polygons $Z$.
The optimal centers of the $k$-center of a convex polygon lie inside it, since otherwise there is a center inside it with a lower or equal radius.
The set of points of each triangle $T_i\in T$ is a subset of the points of $P$, so, they are covered by the centers of $P$ with radius at most $\mathcal{R}(P)$. Based on the optimality of $\mathcal{R}(T_i)$, we have:
\[ \mathcal{R}(T_i)\le \mathcal{R}(P). \]
Taking the maximum over all such triangles gives the following bound:
\[ \max_{\forall T_i \in T} \mathcal{R}(T_i) \le \mathcal{R}(P).\]

Similarly, the centers of the $k$-center of $T_i$'s lie inside $P$, so, $X \subseteq P$ and we have $\mathcal{R}(X) \le \mathcal{R}(P)$.

\medskip
Running \Cref{thm:clustercenter} with approximation factor $2+\epsilon$ on the triangles, gives the set $X$.
 To guarantee the centers chosen by the algorithm lie inside the input polygonal domain, we compute a $k$-center on set $X$, where only points of $X$ can be chosen as centers. The approximation factor of this problem with the $k$-center using arbitrary points of the plane as centers is $2$, and $\mathcal{R}(X)\le \mathcal{R}(T)$.
Using a $2$-approximation $k$-center algorithm on $X$, results in a 4-approximation, and since the approximation factor of \Cref{alg:kcluster} is $2+\epsilon$, \Cref{alg:kclusterarbitrary}, and applying the triangle inequality gives a $(6+\epsilon)$-approximation for $\mathcal{R}(P)$:
\[
\max_{T_i\in T} \mathcal{R}(T_i)+2\mathcal{R}(X)\leq
(2+\epsilon)\mathcal{R}(P)+2( 2\mathcal{R}(P))=(6+\epsilon)\mathcal{R}(P).
\]

\vspace*{-7mm}
\end{proof}	

\subsection{Minimum cost $k$-center}

In \Cref{alg:minkcenterC}, first we build a grid similar to the maximum $k$-center (\Cref{alg:kcluster}) with resolution $\epsilon= \min (\epsilon, \min_{p,q\in P} \min_{u\in p, v\in q} d(u,v))$, then we use colorful $k$-center to find a $k$-center that covers at least one point from each polygon. The limitation is that colorful $k$-center has only been solved for a constant number of colors in polynomial time~\cite{bandyapadhyay2019constant}, which limits the number of input polygons to a constant.
\begin{algorithm}[h]
	\caption{Approximate Minimum $k$-Center of Polygons}
	\label{alg:minkcenterC}
	\begin{algorithmic}[1]
		\Require{A constant size set of polygons $P$, an integer $k$, a constant $\epsilon>0$}
		\Ensure{A set of centers}
		\State{Build a grid of cell length $\epsilon$}
		\State{$M_i = P_i\oplus$ disk of radius $\epsilon$}
		\State{$T_i=$ the closest point of $\cup_j P_j$ to the vertices of the grid inside $M_i$, for $P_i\in P$}
		\State{Color the points in $T_i$, for $P_i\in P$ with color $i$.}
		\State{$c=$ compute the colorful $k$-center of the colored points.}
		\\ \Return{$c$}
	\end{algorithmic}
\end{algorithm}

\begin{theorem}
	\Cref{alg:minkcenterC} is a $O(1)$-approximation for minimum $k$-center.
\end{theorem}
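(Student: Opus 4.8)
The plan is to read Algorithm~\ref{alg:minkcenterC} as a reduction from the minimum $k$-center of polygons to colorful $k$-center, and to show that neither the grid rounding nor the projection onto $\cup_j P_j$ distorts the optimum by more than a constant factor. Fix an optimal solution of radius $r^*$ with centers $c_1^*,\dots,c_k^*\in\cup_jP_j$, and for each polygon $P_i$ fix a witness point $p_i\in P_i$ together with an index $j(i)$ such that $d(p_i,c_{j(i)}^*)\le r^*$. Let $\delta$ be the minimum distance between two distinct input polygons; the grid resolution is clamped so that $\epsilon\le\delta$, which in particular forces $\delta>0$, i.e., the input polygons are disjoint.

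First I would bound the discretization error in both directions. For any point $x$ of a polygon $P_i$, its nearest grid vertex $v$ satisfies $d(x,v)\le\epsilon\sqrt{2}/2<\epsilon$, so $v\in M_i$, and the associated point $t\in T_i$ (the point of $\cup_jP_j$ nearest $v$) satisfies $d(v,t)\le d(v,x)<\epsilon$ since $x\in\cup_jP_j$; hence $d(x,t)<2\epsilon$. Symmetrically, each $t\in T_i$ is the projection of some grid vertex $v\in M_i$, and from $d(v,P_i)\le\epsilon$ we get $d(v,t)\le\epsilon$, so $t$ lies within $2\epsilon$ of $P_i$. Thus $P_i$ and $T_i$ are within Hausdorff distance $2\epsilon$, and each $T_i$ is nonempty.

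Next I would exhibit a near-optimal colorful solution and push it through. Applying the bound to $x=p_i$ yields a color-$i$ point $t_i\in T_i$ with $d(p_i,t_i)<2\epsilon$, and to $x=c_\ell^*$ yields $t_\ell'\in\cup_iT_i$ with $d(c_\ell^*,t_\ell')<2\epsilon$. Then $\{t_1',\dots,t_k'\}$ is feasible for the colorful $k$-center instance the algorithm builds (one unit demand per color, admissible centers drawn from $\cup_iT_i$): for color $i$, $t_i$ lies within $d(t_i,p_i)+d(p_i,c_{j(i)}^*)+d(c_{j(i)}^*,t_{j(i)}')<r^*+4\epsilon$ of $t_{j(i)}'$, so that instance has optimum at most $r^*+4\epsilon$. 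Since the number of colors equals the constant $n$, the algorithm of~\cite{bandyapadhyay2019constant} runs in polynomial time and returns centers from $\cup_iT_i$ of radius $O(r^*+\epsilon)$ covering at least one point of each color. Translating back, these centers lie in $\cup_iT_i\subseteq\cup_jP_j$, so the domain-restriction constraint holds, and by the Hausdorff bound every polygon $P_i$ has a point within $O(r^*+\epsilon)$ of a center, so every polygon is hit; hence the output is a feasible solution to the minimum $k$-center problem of radius $O(r^*+\epsilon)$.

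Finally, the step I expect to be the real obstacle is absorbing the additive $\epsilon$ into $O(r^*)$, which is exactly where $\epsilon\le\delta$ is needed. If $k\ge n$, the instance is degenerate: each $T_i$ is nonempty, so placing one center per color on a point of $T_i$ hits every polygon with radius $0$, the colorful optimum is $0$, and so is the output. If $k<n$, no $k$ points can pierce all of the (disjoint) polygons, so in the optimal solution some center $c_\ell^*$ hits two distinct polygons, giving points $p_a,p_b$ in different polygons with $d(p_a,p_b)\le 2r^*$; since also $d(p_a,p_b)\ge\delta\ge\epsilon$, we get $r^*\ge\epsilon/2$, hence $O(r^*+\epsilon)=O(r^*)$ and the algorithm is an $O(1)$-approximation. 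Beyond this, the only care needed is to align the ``one covered point per color'' requirement of colorful $k$-center with the ``hit every polygon'' requirement here, and to keep all returned centers inside the polygonal domain; both are handled by the coloring of the $T_i$ and by restricting admissible centers to $\cup_iT_i\subseteq\cup_jP_j$.
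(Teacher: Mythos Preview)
Your proof follows the same approach as the paper's: discretize via a grid inside the $\epsilon$-dilated polygons, project grid vertices back onto $\cup_j P_j$ to respect the domain constraint, and reduce to colorful $k$-center with one color per polygon. The paper's own proof is a four-sentence sketch that simply asserts the grid step contributes a multiplicative $(1+\epsilon)$ factor and then invokes the $O(1)$ guarantee of colorful $k$-center; it does not work out the error propagation, does not explain how the colorful solution is translated back to a feasible minimum $k$-center solution, and does not say how the additive $\epsilon$ is absorbed into $r^*$. You supply exactly those missing pieces: the two-sided Hausdorff bound between $P_i$ and $T_i$, the explicit near-optimal colorful solution of cost $r^*+O(\epsilon)$, and the pigeonhole argument for $k<n$ giving $r^*\ge\epsilon/2$ via the clamp $\epsilon\le\delta$.

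One small wrinkle in your $k\ge n$ branch: a point of $T_i$ need not lie in $P_i$---your own Hausdorff estimate only places it within $2\epsilon$ of $P_i$, and the clamp $\epsilon\le\delta$ is not strong enough to force $T_i\subseteq P_i$ (that would require $\epsilon<\delta/2$). So ``placing one center per color on a point of $T_i$'' does not hit every polygon with radius $0$; it hits with radius at most $2\epsilon$, and since $r^*=0$ in this case the ratio is unbounded. The fix is trivial---either tighten the clamp to $\epsilon<\delta/2$ so the projection of any grid vertex in $M_i$ lands in $P_i$, or simply dispose of $k\ge n$ separately by placing one center inside each polygon---and the paper's sketch does not address this edge case at all.
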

\begin{proof}
	To guarantee that the vertices of the grid cover the area of each polygon with distance at most $O(\epsilon)$, we compute the grid vertices inside the Minkowski sum of each polygon with the disk of radius $\epsilon$.
	To satisfy the constraint that the centers must be a subset of input polygons, we replace these points with their nearest neighbors in the input polygons.
	Computing the solution on the vertices of a grid adds a $1+\epsilon$ factor to the approximation.
	The approximation factor of colorful $k$-center is $O(1)$ for a constant number of colors.
\end{proof}

\section{Experimental studies}
We implement our algorithm for maximum $k$-center in the aggregated uncertainty model (\Cref{alg:kclusterarbitrary}) and compare the results to the $k$-center of points (\cite{malkomes2015fast}) on a big network data-set~\cite{cho2011friendship,snapnets}, for $k=20$ and $\epsilon=5$, where $\epsilon$ is the grid cell length.

\medskip
The approximation factors of the $k$-center algorithms for big data is 4 for the metric case \cite{malkomes2015fast}, and $2+\epsilon$ for the doubling metrics, including low-dimensional Euclidean space \cite{ceccarello2019solving,aghamolaei2018composable}.
Since we want a small summary size, we use the $4$-approximation algorithm with a summary of size $O(kL)$ instead of the $(2+\epsilon)$-approximation algorithms with summaries of size $\omega(\frac{kL}{\epsilon^2})$, where $L$ is the number of partitions. \Cref{alg:kclusterarbitrary} creates a summary of size $O(\frac{L}{\epsilon^2})$, which is still feasible in practice. Note that $\epsilon$ is about $1/10$ times the radius in our experiment and halving the value of $\epsilon$ can multiply the size of the summary by at most the doubling dimension of the Euclidean plane, which is $5$.

\medskip
The data-set used in our experiments is the time and location information of check-ins made by users of the Brightkite social network. It has 58,228 users and 4,491,143 check-ins of these users over the period of Apr. 2008 - Oct. 2010.
It is available as a part of the SNAP network data-sets~\cite{snapnets}.
The number of users with at least one check-in the aforementioned period of time is 51,685. Each user had at most 325,821 check-ins.

\medskip
In then implementation of \Cref{alg:kclusterarbitrary}, we also
use a test data set, that contains the grid points which lie inside the convex polygons. We compare the output of our algorithm on both the test data set and the convex polygons itself.

We consider each check-in data as a point in 2D space.
We use the maximum $k$-center algorithm of polygons to cluster the users into $k$ groups.
To summarize the data of each user, we compute their convex hull. As the test data-set, we use the vertices of the grid of cell-length $\epsilon$ that fall inside each convex hull as possible future locations of that user.
After computing the convex-hulls, the total number of vertices was 189,355, which is an almost 23.7\% data-compression ratio.

 \begin{figure}[!h]
 \vspace*{1mm}
 	\centering
 	\includegraphics[scale=0.49]{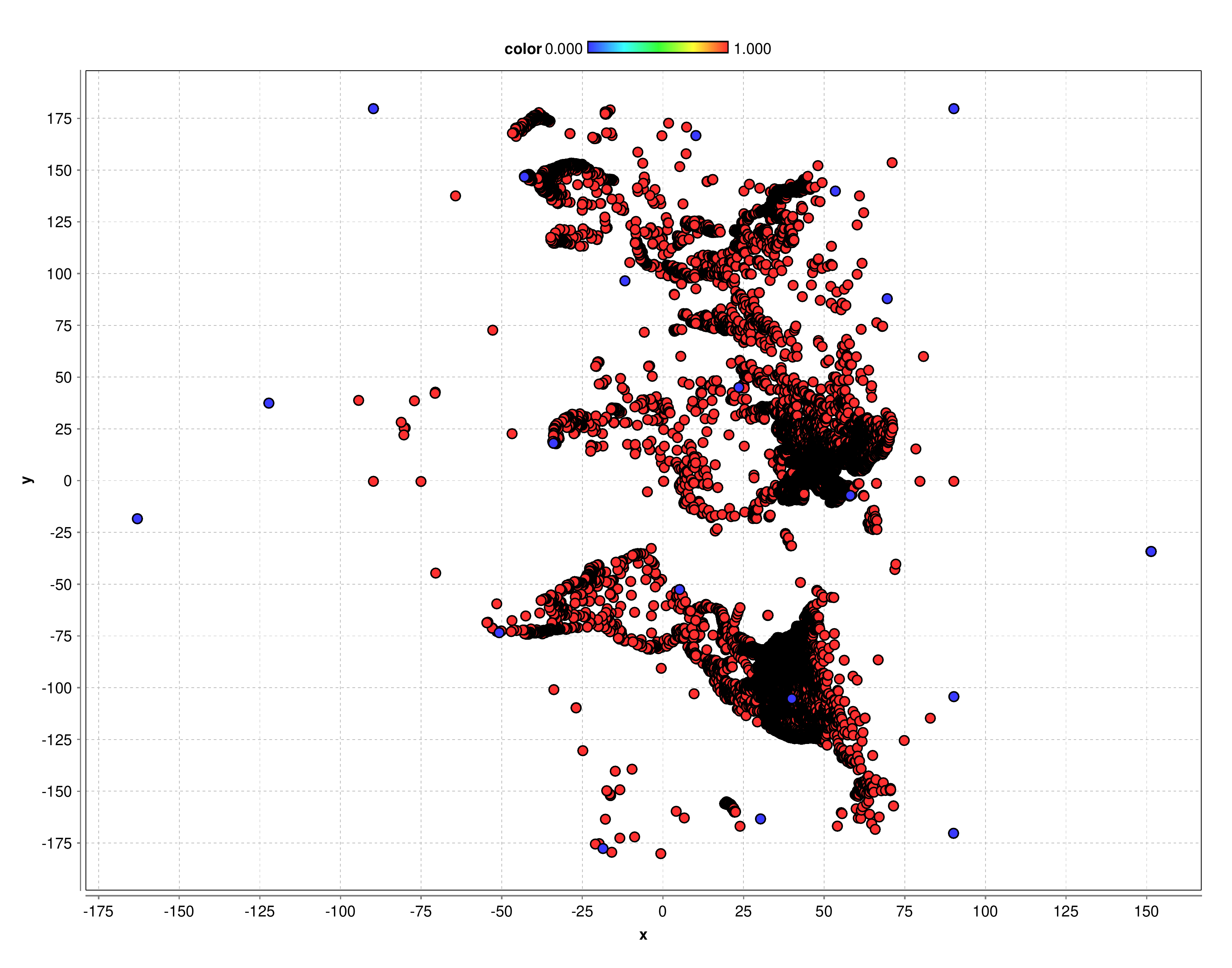}\vspace*{-5mm}
 	\caption{The approximate $k$-center of vertices using the $4$-approximation $k$-center algorithm for points~\cite{malkomes2015fast} for $k=20$ (blue). The red points are the summary of points for $k=20$ after the first round, which are the union of $k$-centers of the partitions.}
 	\label{fig:composable_gonz}
 \end{figure}

  \begin{figure}[!h]
  %\vspace*{-2mm}
 	\centering
 	\includegraphics[scale=0.49]{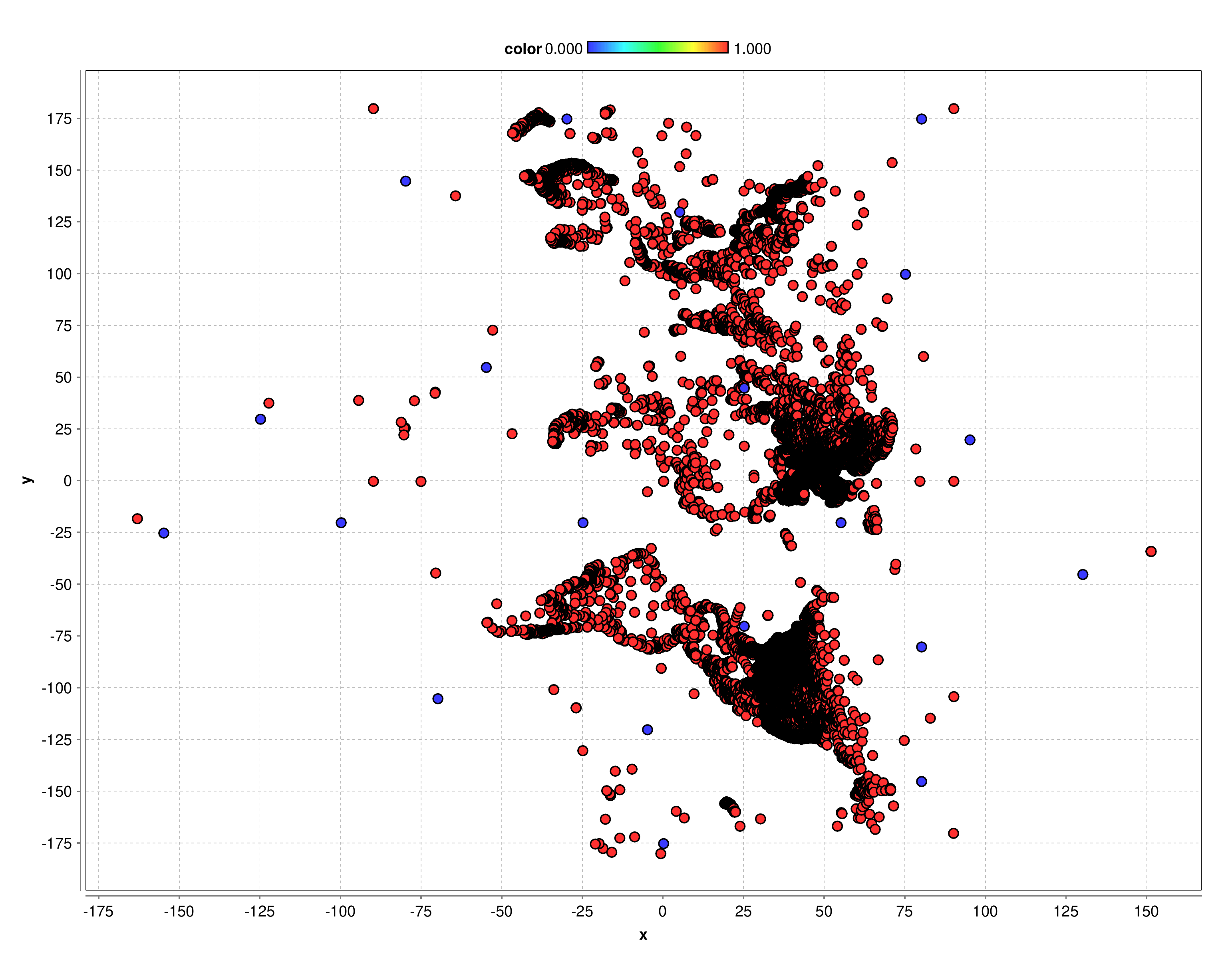}\vspace*{-5mm}
 	\caption{The approximate $k$-center of polygons computed by sampling equidistant points (grid vertices) inside the shapes and then clustering them~\Cref{alg:kclusterarbitrary}, for $k=20$ and $\epsilon=5$ (blue). The red points are the summary of $4$-approximation $k$-center of points~\cite{malkomes2015fast} for $k=20$ and show the centers computed by this algorithm are good centers for the summary points computed in the first round of~\cite{malkomes2015fast}.}
 	\label{fig:composable_grid}
 \end{figure}

In \Cref{fig:composable_gonz}, the summary and the centers of the algorithm of \cite{malkomes2015fast} are shown in red and blue, respectively.
 \Cref{fig:composable_grid} shows the centers of \Cref{alg:kclusterarbitrary} and the summary of the algorithm of \cite{malkomes2015fast}.
In \Cref{fig:grid_image}, the output of \Cref{alg:kclusterarbitrary} for clustering the convex polygons is depicted. The red points are the vertices of the grid inside the convex hull of the convex polygons, and the blue points are the centers of the algorithm.
The result of the clustering of the points of the grid (test data set) with the centers computed by the algorithm presented in~\cite{malkomes2015fast} is illustrated in \Cref{fig:grid_gonz}.

\begin{figure}[!h]
	\centering
		\includegraphics[scale=0.49]{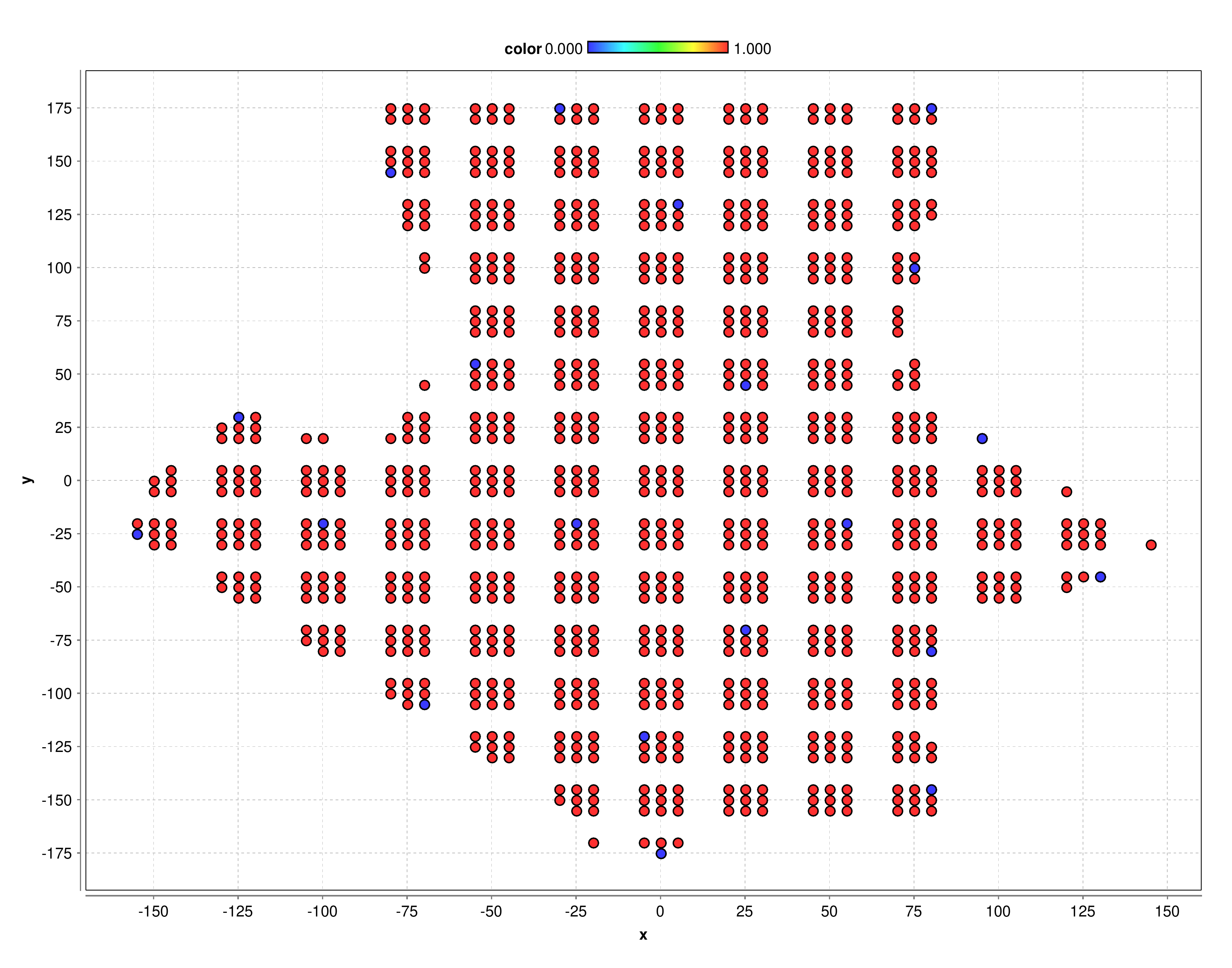}\vspace*{-4mm}
	\caption{The approximate $k$-center of polygons by sampling equidistant points (grid vertices) inside the shapes and then clustering them \Cref{alg:kclusterarbitrary}, for $k=20$ and $\epsilon=5$ (blue). The red points are the summary of~\Cref{alg:kclusterarbitrary} for $\epsilon=5$ that are shown in the figure as the representatives of the input shapes which show the polygonal shape of the input.}
	\label{fig:grid_image}
\end{figure}

\begin{figure}[h]
\vspace*{-1mm}
	\centering
	\includegraphics[scale=0.47]{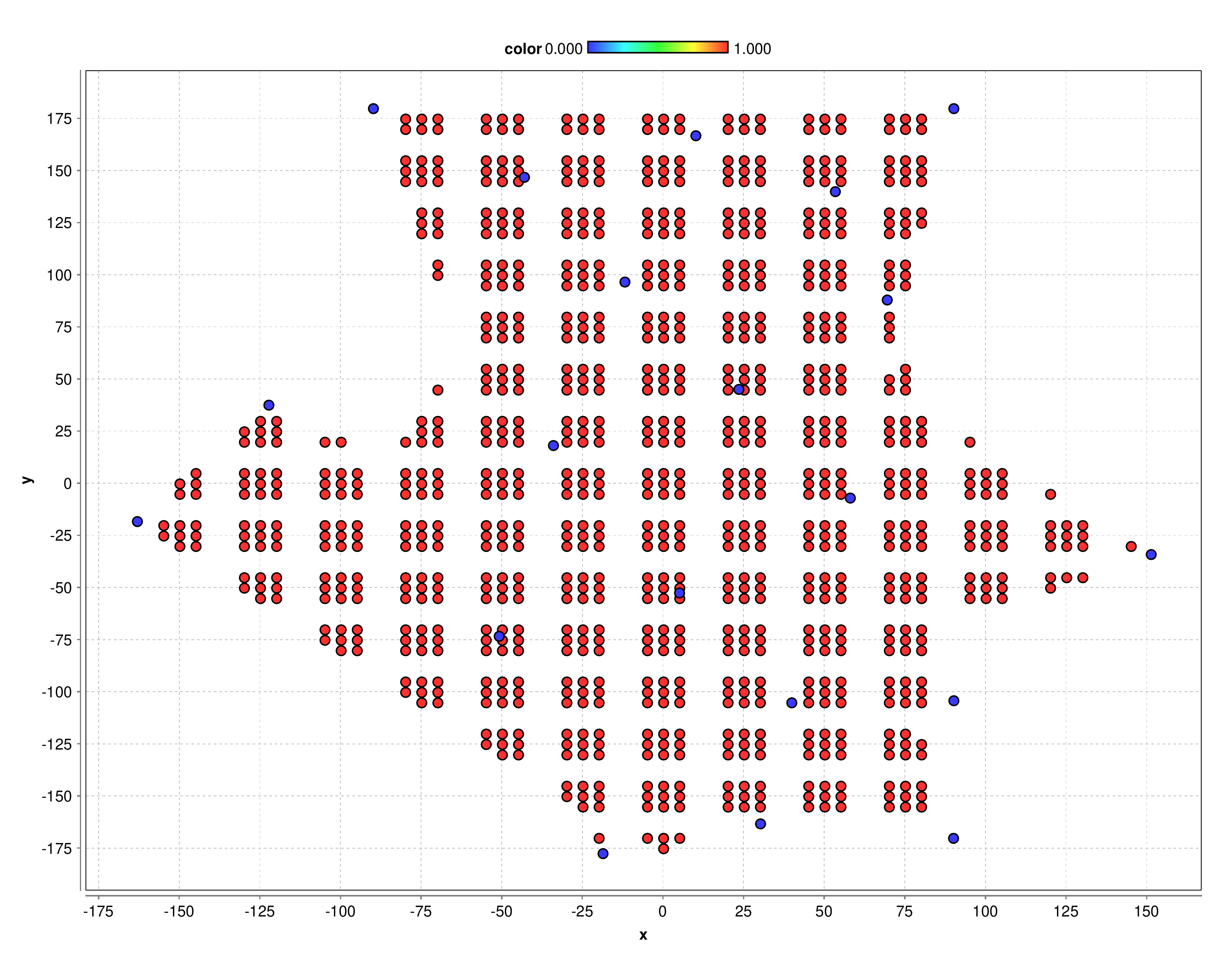}\vspace*{-5mm}
	\caption{The approximate $k$-center of points computed using the $4$-approximation algorithm for $k$-center of points~\cite{malkomes2015fast}, for $k=20$ and $\epsilon=5$ (blue). The red points are the summary of~\Cref{alg:kclusterarbitrary} for $\epsilon=5$ and approximately represent the input shape. Note that the blue points are the centers of the disks that cover the red points, and the clustering radius is the maximum distance between a red point and its nearest blue point.}
	\label{fig:grid_gonz}\vspace*{-2mm}
\end{figure}

\eject
In the figures, only the grid points that lie inside the convex polygons are shown (in red). The computed centers are shown in blue.
 See \Cref{table:results} for the summary of the results.

\renewcommand{\arraystretch}{1.4}
\begin{table}[h!]
	\centering
	\caption{The summary of the experimental results.}
	\label{table:results}
	\begin{tabular}{|l|l|l|l|l|l|}
		\hline
		$k$ & $\epsilon$ & data-set & alg & summary size & radius\\
		\hline
		\hline
		\multirow{ 4}{*}{20} & - & input & \cite{malkomes2015fast} & 580,327 & $r=49.7757$ \\
		& $\epsilon=5$ & input & \Cref{alg:kclusterarbitrary} & 135,890 & $r=51.76$\\
		\cline{2-6}
		& - & test & \cite{malkomes2015fast} & 580,327 & $r= 65.1846$ \\
		& $\epsilon=5$ & test & \Cref{alg:kclusterarbitrary} & 135,890 & $r=51.76$\\
		\hline
	\end{tabular}
\end{table}

In our experiments, the radius of covering the input points are almost the same, which implies the effective sampling of our algorithm in practice. However, the set of vertices of a grid with cell-length $\epsilon$ inside each convex polygon has also been tested for the centers computed by the algorithm of \cite{malkomes2015fast}, which require a slightly larger radius than the centers computed with our algorithm.%, which results in a radius $1.31$ times that of our algorithm.

\section{Conclusions and open problems}

In this paper, we introduced a new clustering problem so-called MinMax segments clustering.
Our algorithms for solving this problem are mostly bicriteria approximations.
Extending our model to other shapes such as polygons and disks remains open, as well as improving the approximation factors of our algorithms.
Solving these problems in the presence of outliers can also be interesting.

Also, we introduced the multi-interval set cover and proved it is NP-hard, and gave approximation algorithms for it, which can be of independent interest.

We generalized the region-based uncertainty model to allow multiple points from each region, to allow multiple centers to be chosen from the same region.
Assuming the input points are in the region-based uncertainty model with polygonal regions, we gave bicriteria approximation algorithms for the domain-restricted $k$-center of polygons in this model.

Further directions of research about aggregated uncertainty models can be adding weights to points, that are useful in problems where the number of points matters, such as k-means and capacitated clustering.

\subsection*{Acknowledgement}
Vahideh Keikha was supported by the Czech Science Foundation, grant number GJ19-06792Y and with institutional support RVO:67985807.

           %\FloatBarrier%\bibliographystyle{fundam}\bibliography{abbrv,refs}

\end{document}